\let\cl@chapter\undefined
\pgfplotsset{compat=1.13}
\newcommand{\ie}{\textit{i.e.}\,}
\newcommand{\eg}{\textit{e.g.}\,}
\newcommand{\R}{\mathbb R}
\newcommand{\dd}{\;\mathrm{d} }
\newcommand{\ddx}{\;\mathrm{d} {\mathbf x}}
\newcommand{\ddsx}{\;\mathrm{d} s_{\mathbf x}}
\newcommand{\vect}[1]{{\mathbf #1}}
\newcommand{\pfrac}[2]{\frac{\partial #1 }{\partial #2} }
\newcommand{\projector}{{\mathbf P}}
\newcommand{\cuboid}{B}
\newcommand{\n}{\RealNormalvector}
\newcommand{\ctimes}{\cdot\!\!\times}
\newcommand{\pos}{{\mathbf x}}
\newcommand{\bodyforce}{\mathbf b}
\newcommand{\refArea}{U}
\newcommand{\parametrization}{\hat{\mathfrak g}}
\newcommand{\Surface}{\Omega}
\newcommand{\baseVector}{\hat{\mathbf g}}
\newcommand{\RealBaseVector}{{\mathbf g}}
\newcommand{\dualbaseVector}{{\hat {\mathbf g}}}
\newcommand{\indA}{\alpha}
\newcommand{\indB}{\beta}
\newcommand{\indC}{\gamma}
\newcommand{\indD}{\varphi}
\newcommand{\indE}{\varkappa}
\newcommand{\indF}{\xi}
\newcommand{\displacement}{\vect u}
\newcommand{\christbS}[2]{ \hat\Gamma_{#1}^{#2}}
\newcommand{\curvature}{\hat h}
\newcommand{\RealNormalvector}{\pmb{\nu}}
\newcommand{\RealWeingarten}{{{\mathbf H}}}
\newcommand{\ureal}{{\mathbf u}}
\newcommand{\RealChangeMetric}{{{\pmb \gamma}}}
\newcommand{\RealChangeCurvature}{{{\pmb \rho}}}
\newcommand{\RealTensor}{{\mathbf T}}
\newcommand{\realTensor}{{T}}
\newcommand{\paraMetric}{\hat{ G}}
\newcommand{\ParaMetric}{\hat{\mathbf G}}
\newcommand{\ParaNormalvector}{\hat{\pmb{\nu}}}
\newcommand{\ParaWeingarten}{{\hat{\mathbf H}}}
\newcommand{\upara}{{\hat{\mathbf u}}}
\newcommand{\paraChangeMetric}{{\hat{ \gamma}}}
\newcommand{\ParaChangeMetric}{{\hat{\pmb \gamma}}}
\newcommand{\paraChangeCurvature}{{\hat{ \rho}}}
\newcommand{\ParaChangeCurvature}{{\hat{\pmb \rho}}}
\newcommand{\ParaTensor}{\hat{\mathbf T}}
\newcommand{\um}[1]{#1}
\newcommand{\Nm}[1]{#1}
\newcommand{\Nmm}[1]{#1}
\newcommand{\appendexref}[1]{Appendix \ref{#1}}
\newcommand{\stuff}[1]{{\footnotesize \color{gray}}}
\journalname{Computational Mechanics}
\begin{document}

\title{A $C^1$-continuous Trace-Finite-Cell-Method for linear thin shell analysis on implicitly defined surfaces}


\titlerunning{Trace-Finite-Cell-Method for thin shell analysis}        

\author{Michael H. Gfrerer         
}


\institute{Michael H. Gfrerer \at
              Institute of Applied Mechanics, Graz University of Technology, Technikerstrasse 4, 8010 Graz, Austria \\
              Tel.: +43 (316) 873 - 7149 \\
              Fax: +43 (316) 873 - 7641 \\
              \email{gfrerer@tugraz.at}           
}

\date{Received: date / Accepted: date}

\maketitle

\begin{abstract}
A Trace-Finite-Cell-Method for the numerical analysis of thin shells is presented combining concepts of the TraceFEM and the Finite-Cell-Method. As an underlying shell model we use the Koiter model, which we re-derive in strong form based on first principles of continuum mechanics by recasting well-known relations formulated in local coordinates to a formulation independent of a parametrization. The field approximation is constructed by restricting shape functions defined on a structured background grid on the shell surface. As shape functions we use on a background grid the tensor product of cubic splines. This yields $C^1$-continuous approximation spaces, which are required by the governing equations of fourth order. The parametrization-free formulation allows a natural implementation of the proposed method and manufactured solutions on arbitrary geometries for code verification. Thus, the implementation  is verified by a convergence analysis where the error is computed with an exact manufactured solution. Furthermore, benchmark tests are investigated.	
	
\keywords{finite element method \and implicit geometry \and  Koiter shell \and Finite-Cell-Method \and TraceFEM}
\end{abstract}

\section{Introduction}
Due to the superior load-carrying capabilities, the mechanical analysis of shells is of great interest in engineering. A large literature body exists on the formulation of shell models. We refer to \cite{naghdi1981finite,pietraszkiewicz1989geometrically,bischoff2004} and references therein for an overview. In the present paper, we consider the classical Koiter model \cite{koiter1966nonlinear,ciarlet2006}, which is one of the most popular models for thin shells. The Koiter shell model has been justified by asymptotic analysis in \cite{Ciarlet1996c}, in the sense of being a reasonable approximation to the full 3D problem of a thin shell-like body. Existence and uniqueness results can be found in \cite{blouza1999existence,ciarlet2006}.
 
Classically, for theoretical treatment it is assumed that the shell mid-surface is represented by a global parametrization. However, for the numerical treatment typically the mid-surface is approximated by, possibly curved, finite elements, \ie represented by a collection of local parametrizations. In contrast to these representations, we consider the case where the mid-surface is represented implicitly as the zero-level set of a scalar function $\phi(\pos)$, see \Cref{fig::intro_ex} for the illustration of some examples.  We refer to the review article \cite{dziuk2013finite} for an overview of finite element methods for problems on such surfaces. In the classical surface finite element method the discretization of the unknown field relies on the higher order or exact meshing (local parametrization) of the surface \cite{demlow2009higher,gfrerer2018b}. In contrast to this, in the proposed method the discretization of the displacement field does not rely on parametrizations. 
Therefore, we provide a throughout derivation of the governing equations based on first principles of continuum mechanics independent of a parametrization. This allows a natural implementation of the method and also the construction of manufactured solutions for code verification on arbitrary geometries. Equivalent derivations relying on a parametrization can be found in \eg \cite{basar1985} and \cite{sauer2017theoretical}. We remark that membrane and thin shell formulations without relying on a parametrization with a mathematical focus can be found in \cite{gurtin1975continuum,delfour1997differential}. For a treatment from an engineering perspective we refer to \cite{van2015finite,schollhammer2019kirchhoff}.   

\begin{figure}[th]
\subfloat[sphere \label{fig::intro_ex_A}]
{\includegraphics[width=0.22\textwidth]{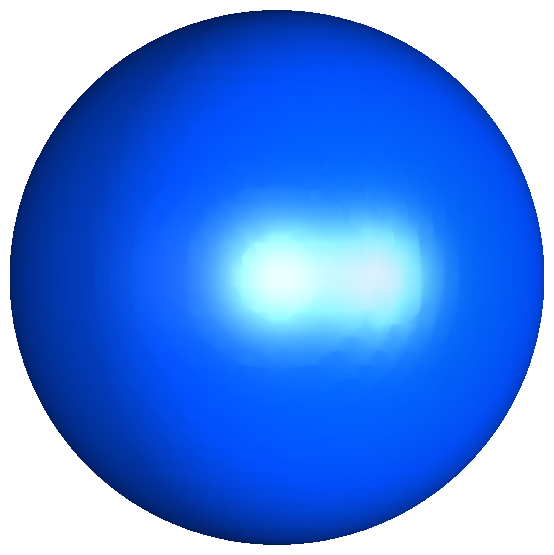}}
\hfill
\subfloat[torus	\label{fig::intro_ex_B}]
{\includegraphics[width=0.22\textwidth]{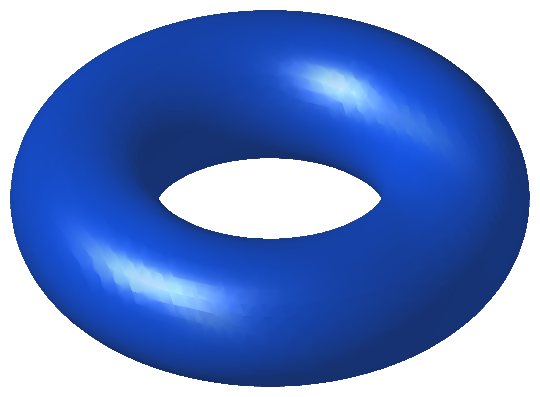}}
\hfill
\subfloat[cylinder \label{fig::intro_ex_C}]
{\includegraphics[width=0.22\textwidth]{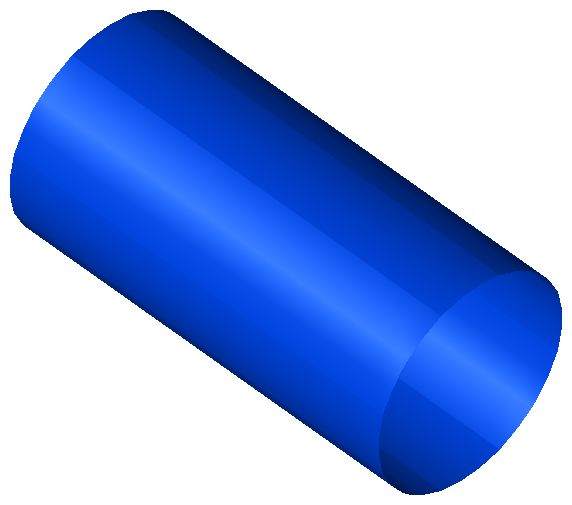}}
\hfill
\subfloat[gyroid \label{fig::intro_ex_D}]
{\includegraphics[width=0.22\textwidth]{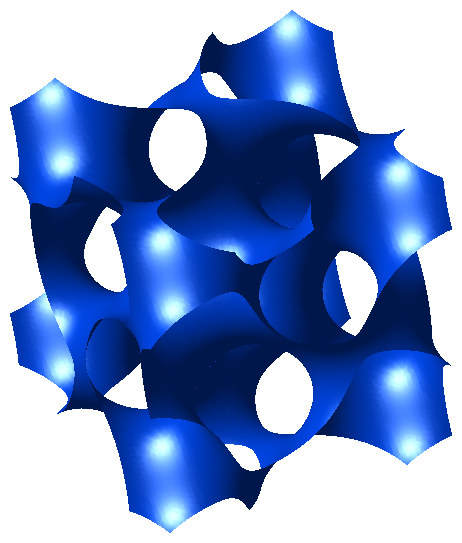}}
\caption{Examples of implicitly defined surfaces. The surfaces are defined by the level-set functions (a) $\phi=x^2+y^2+z^2-r^2$, (b) $\phi = (x^2+y^2+z^2+R^2-r^2)^2- 4R^2(x^2+y^2)$, (c) $\phi = x^2+y^2 - r^2$ (d) $\phi = \sin x\cos y+\sin y\cos z+\sin z\cos x$   }
\label{fig::intro_ex}
\end{figure}

One of the main difficulties in developing finite element methods for thin shells is the construction of $C^1$-continuous approximation spaces. For general unstructured meshes it is not possible to ensure $C^1$-continuity with only local polynomial shape functions and the nodal degrees of freedom consist of displacements and slopes only \cite{Zienkiewicz_Taylor_2000}. However, different non-standard triangular for developed thin plate bending are the Argyris element \cite{argyris1968tuba,Dominguez_Sayas_2008}, the Bell element \cite{bell1969refined} or the Clough-Tocher macrotriangle \cite{clough1965finite}. A further possibility to construct $C^1$-continuous approximation spaces on general space triangulation relies on sophisticated techniques from subdivision surfaces \cite{cirak2000subdivision}. However, on a structured quadrilateral mesh the Bogner-Fox-Schmit element \cite{bogner1965generation} is a simple conforming element. The constraint of a structured quadrilateral mesh can be partially overcome by introducing a smooth mapping of the geometry \cite{gfrerer2018code}. This idea can be realized in an isoparametric way by the use of splines for the geometry mapping and for the discretization of the displacement field \cite{Kiendl_Bletzinger_Linhard_Wüchner_2009}. The general difficulty of constructing $C^1$-continuous approximation spaces led to approaches where the $C^1$-continuity requirement is circumvented. Among them we mention discrete Kirchhoff elements \cite{Batoz_Zheng_Hammadi_2001,Areias_Song_Belytschko_2005} where the Kirchhoff constraint is enforced only at discrete points, the use of shear-deformable (Reissner-Mindlin) shell theory, were only $C^0$-continuity approximation spaces are required, mixed methods \cite{Rafetseder_Zulehner_2019,neunteufel_hellanherrmannjohnson_2019}, continuous/discontinuous Galerkin methods \cite{engel2002continuous,Hansbo_Larson_2017} and others. 

In the present paper, we combine ideas from unfitted finite element methods and the Bogner-Fox-Schmit element. Following the idea of the TraceFEM \cite{Olshanskii_Reusken_Grande_2009,Olshanskii_Reusken_2017} (see also CutFEM \cite{burman2015cutfem,burman2018cut}) the approximation of the displacement field is constructed by restricting shape functions defined on a background mesh on the shell surface. In particular, we follow the idea of the Finite-Cell-Method \cite{Parvizian_Düster_Rank_2007,schillinger2015finite} and use a structured grid where the simple tensor product of three uni-variant cubic spline shape functions leads to a $C^1$-continuous approximation in 3D (like the Bogner-Fox-Schmit element in 2D). Therefore, the shape functions for approximation of the displacement field on the shell mid-surface are $C^1$-continuous. We remark that cut Bogner-Fox-Schmit elements for thin plates were proposed and analyzed in \cite{burman_cut_2019}. Therefore, the proposed method can be seen as an extension of the work \cite{burman_cut_2019} from plates to curved shells.

One challenge in unfitted finite element methods is the efficient integration on the problem domain \cite{olshanskii2016numerical}. During the preparation of the present paper it turned out that due to a gowning number of constraints for finer meshes the strategy developed in \cite{gfrerer2018a} is not applicable in the present situation. Therefore, we have implemented the quadrature strategy developed in \cite{saye2015high}.

The implementation of the proposed method is verified by a convergence analysis where the error is computed with an exact manufactured solution. Furthermore, the capabilities of the method are shown in two standard and one non-standard benchmark tests.
\section{Notation and geometric preliminaries}
The underlying assumption in shell analysis is that the computational domain has a small extension with respect to one coordinate. Thus, we assume that it is located around a two-dimensional mid-surface $\Surface$ which is embedded in $\R^3$. In the present paper we assume that the mid-surface is defined implicitly as the zero-level set of a function $\phi: \R^3  \rightarrow \R$ inside a cuboid $\cuboid \subset \R^3$,
\begin{equation}\label{eq:implicitRepresentation}
\Surface = \{\pos\in\cuboid|\,\phi(\pos) = 0 \}.
\end{equation}
The boundary of $\Surface$ is denoted $\Gamma$, the surface normal vector is denoted by $\RealNormalvector$, and the normal vector tangential to the surface on a boundary point is denoted by $\pmb \mu$, see \Cref{fig::problem}.
\begin{figure}[ht]
	\centering
	\includegraphics[width=0.3\textwidth]{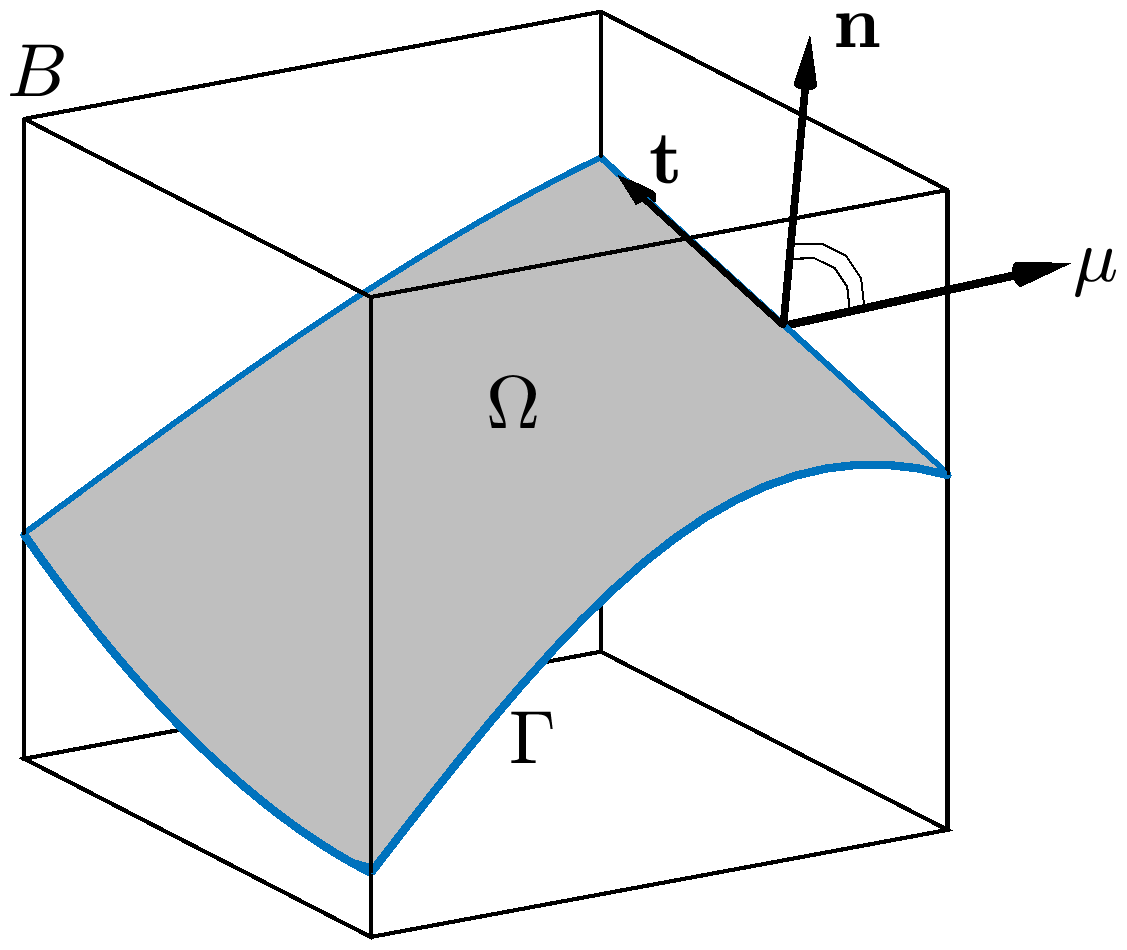}
	\caption{Illustration and notation of the geometric setting}
	\label{fig::problem}
\end{figure} 
We assume that $\Surface$ is regular such that
\begin{equation}
\nabla \phi(\pos) \neq \mathbf 0,
\end{equation}
holds in the neighborhood of $\Surface$, where $\nabla$ denotes the usual gradient of some scalar-valued function $f:\R^3\rightarrow\R$,
\begin{equation}\label{eq::gradient}
\nabla f(\pos) =  f_{,i} \,\mathbf e^i :=  \frac{\partial f(\mathbf x)}{\partial x_i} \mathbf e^i
\end{equation}
with the Cartesian coordinates $\pos = (x_1,x_2,x_3)$ and the standard Cartesian orthonormal basis $\{\mathbf e^1,\,\mathbf e^2,\,\mathbf e^3\}$. Here, and in the following, the Einstein summation convention applies. Whenever an index occurs once in an upper position and in a lower position we sum over this index, where Latin indices $i,j,\dots$ take the values $1,2,3$ whereas Greek indices $\indA, \indB,\dots$ take the values $1,2$. Let $\mathcal T$ be some tensor space of the form $\R^3\otimes\dots \otimes \R^3$.
 In the following we also use the generalization of the gradient for scalar-valued functions \eqref{eq::gradient} to tensor-valued functions $\RealTensor: \R^3  \rightarrow \mathcal T$,
\begin{equation}
\nabla \RealTensor(\pos) =  \RealTensor_{,i} \otimes\mathbf e^i.
\end{equation}
\subsection{Differential geometry of implicitly defined surfaces}
Given a implicit representation \eqref{eq:implicitRepresentation} of the surface $\Surface$, we can compute the unit normal vector to the surface by 
\begin{equation}\label{eq::real_normal_vector}
\RealNormalvector(\pos) = \frac{\nabla \phi(\pos)}{||\nabla \phi(\pos)||},
\end{equation}
and are able to define the tangential projector,
\begin{equation}
\projector = \mathbf I - \RealNormalvector \otimes \RealNormalvector.
\end{equation}
Furthermore, the extended Weingarten map is given by
\begin{equation}\label{eq::real_weingarten}
\RealWeingarten  = - \nabla \RealNormalvector \cdot \projector = -\frac{\mathbf P \cdot \nabla\nabla\phi \cdot \mathbf P}{||\nabla \phi||},
\end{equation}
and the mean curvature $H$ is defined as 
\begin{equation}
H = \text{tr}(\RealWeingarten) = \RealWeingarten : \projector.
\end{equation}
\subsection{Differential geometry of parametrized surfaces}
We briefly review the differential geometry of parametrized surfaces. For details we refer to \eg \cite{ciarlet2006}. Although the mid-surface $\Surface$ is assumed to be given implicitly, at least a local parametric representation is guaranteed to exist by the implicit function theorem. This justifies to consider parametrizations $\parametrization:\;\refArea \subset \R^2 \rightarrow \Surface$ with the parameter domain $\refArea$ for theoretical considerations. Given the parametrization $\parametrization(\theta^1,\theta^2)$, we can define the two covariant base vectors $\baseVector_\indA := \pfrac{\parametrization}{\theta^\indA}$, which span the tangent plane to $\Surface$. With the base vectors we can define the unit normal vector 
\begin{equation}
\ParaNormalvector(\theta^1,\theta^2)  = \frac{\baseVector_1(\theta^1,\theta^2) \times \baseVector_2(\theta^1,\theta^2)}{||\baseVector_1(\theta^1,\theta^2) \times \baseVector_2(\theta^1,\theta^2)||},
\end{equation}
and the covariant coefficients of the metric $\paraMetric_{\indA\indB} = \baseVector_\indA \, \cdot \, \baseVector_\indB$. The contravariant coefficients of the metric are given by $[\paraMetric^{\indA\indB}]=[\paraMetric_{\indA\indB}]^{-1}$, where $[\paraMetric_{\indA\indB}]$ is the coefficient matrix. The contravariant base vectors can then be computed by $\dualbaseVector^\indA = \paraMetric^{\indA\indB} \baseVector_\indB$. The covariant coefficients of the Weingarten map $\ParaWeingarten = \curvature_{\alpha\beta} \, \baseVector^\indA \otimes \baseVector^\indB$ are given by
\begin{equation}
\curvature_{\alpha\beta} = - \baseVector_{\alpha} \cdot \ParaNormalvector_{,\beta},
\end{equation}
and obey the symmetry relation $\curvature_{\alpha\beta} = \curvature_{\beta\alpha}$. The mean curvature $\hat H$ is given by
\begin{equation}
\hat H = \curvature_{\alpha}^{\alpha} = \curvature_{\alpha\beta} \paraMetric^{\indB\indA}.
\end{equation}
Furthermore, the derivatives of the base vectors are given by
\begin{equation}\label{eq::derivativeBaseVector}
\begin{aligned}
\baseVector_{\indA,\indB} = \Gamma_{\indA\indB}^\indC \baseVector_{\indC} + \curvature_{\indA\indB} \ParaNormalvector, \\
\baseVector_{,\indB}^{\indA} = -\Gamma_{\indB\indC}^\indA \baseVector^{\indC} + \curvature^{\indA}_{\indB} \ParaNormalvector,
\end{aligned}
\end{equation}
with the surface Christoffel symbols of the second kind defined by
\begin{equation}
\christbS{\indA\indB}{\indC} = \baseVector^\indC \cdot \baseVector_{\indA,\indB}. 
\end{equation}
\textbf{Remark:} In our notion a hat over a quantity refers to a dependency on the parametric coordinates $(\theta^1,\theta^2)\in \refArea$, whereas no hat refers to a dependency on $\pos \in \R^3$. 
\subsection{Relations between parameter space and embedding space}
The field $\upara(\theta^1,\theta^2)$ defined on the parameter space is related to the field $\ureal(\pos)$ defined on the embedding space $\R^3$ by
\begin{equation}\label{eq::relation}
\begin{aligned}
\upara(\theta^1,\theta^2) &= \ureal(\pos)\circ \parametrization(\theta^1,\theta^2) = \ureal(\parametrization(\theta^1,\theta^2)) 
\end{aligned}
\end{equation}
By applying the chain rule we find that the first and second derivatives are related by
\begin{equation}\label{eq::relationDerivative}
\begin{aligned}
\upara_{,\theta} &= (\nabla \ureal \circ \parametrization) \cdot \baseVector_\theta, 
\end{aligned}
\end{equation}
and
\begin{equation}\label{eq::relationDerivativeSecond}
\begin{aligned}
\upara_{,\theta\tau} &= (\nabla\nabla \ureal \circ \parametrization)\cdot \baseVector_\tau \cdot \baseVector_\theta  + (\nabla \ureal \circ \parametrization) \cdot \baseVector_{\theta,\tau}\\
&= (\nabla\nabla \ureal \circ \parametrization)\cdot \baseVector_\tau \cdot \baseVector_\theta  + (\nabla \ureal \circ \parametrization) \cdot (\Gamma_{\theta\tau}^\indA \, \baseVector_\indA + h_{\theta\tau} \,\RealNormalvector) \\
&= (\nabla\nabla \ureal \circ \parametrization): (\baseVector_\tau \otimes \baseVector_\theta)  + (\nabla \ureal \circ \parametrization) \cdot (\Gamma_{\theta\tau}^\indA \, \baseVector_\indA + h_{\theta\tau} \,\RealNormalvector).
\end{aligned}
\end{equation}
Furthermore, we have the following relations summarized in the following lemma.
\begin{lemma}\label{thm::projectorMetric}
	The metric tensor $\ParaMetric = \paraMetric_{\indA\indB} \,\baseVector^\indA \otimes \baseVector^\indB = \baseVector_\indA \otimes \baseVector^\indA$ and the projector $\mathbf P$ are related by
	\begin{equation}
	\ParaMetric = \mathbf P \circ \parametrization.
	\end{equation}
	For the Weingarten map we have the relation 
	\begin{equation}
	\ParaWeingarten = {\mathbf H}\circ\parametrization.
	\end{equation}
	The proof can be found in \appendexref{sec::proofs}.
\end{lemma}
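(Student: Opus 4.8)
The plan is to read both identities pointwise on $\Surface$: for every $(\theta^1,\theta^2)\in\refArea$ one shows that the tensor on the left, assembled from the parametrization $\parametrization$, equals the tensor on the right evaluated at $\pos=\parametrization(\theta^1,\theta^2)$. The common starting point is to compare the two constructions of the unit normal. Since $\phi\circ\parametrization\equiv 0$ by \eqref{eq:implicitRepresentation}, differentiating with respect to $\theta^\indA$ and using the chain rule \eqref{eq::relationDerivative} gives $(\nabla\phi\circ\parametrization)\cdot\baseVector_\indA=0$; by \eqref{eq::real_normal_vector} the field $\RealNormalvector\circ\parametrization$ is thus a unit vector orthogonal to $\baseVector_1$ and $\baseVector_2$. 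As that orthogonal complement is one–dimensional, $\ParaNormalvector=\pm\,\RealNormalvector\circ\parametrization$, and I would fix the orientation of the (local) parametrization so that $\ParaNormalvector=\RealNormalvector\circ\parametrization$.

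For the metric identity, note that $\{\baseVector_1,\baseVector_2,\ParaNormalvector\}$ is a basis of $\R^3$ whose dual basis is $\{\baseVector^1,\baseVector^2,\ParaNormalvector\}$: indeed $\baseVector^\indA\cdot\baseVector_\indB=\delta^\indA_\indB$ by the definition of the contravariant base vectors, $\baseVector^\indA\cdot\ParaNormalvector=0$ because $\baseVector^\indA$ is a linear combination of $\baseVector_1,\baseVector_2$, and $\ParaNormalvector\cdot\ParaNormalvector=1$. Hence the resolution of the identity in this basis reads $\mathbf I=\baseVector_\indA\otimes\baseVector^\indA+\ParaNormalvector\otimes\ParaNormalvector$, so that
\begin{equation*}
\ParaMetric=\baseVector_\indA\otimes\baseVector^\indA=\mathbf I-\ParaNormalvector\otimes\ParaNormalvector=\bigl(\mathbf I-\RealNormalvector\otimes\RealNormalvector\bigr)\circ\parametrization=\projector\circ\parametrization,
\end{equation*}
where I used $\ParaNormalvector\otimes\ParaNormalvector=(\RealNormalvector\otimes\RealNormalvector)\circ\parametrization$ (valid regardless of the sign above) together with the definition of $\projector$.

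For the Weingarten identity, I would apply \eqref{eq::relationDerivative} to the vector field $\RealNormalvector$, using $\ParaNormalvector=\RealNormalvector\circ\parametrization$, to obtain $\ParaNormalvector_{,\indB}=(\nabla\RealNormalvector\circ\parametrization)\cdot\baseVector_\indB$. Differentiating $\RealNormalvector\cdot\RealNormalvector\equiv 1$ shows that every $\RealNormalvector_{,i}$ is orthogonal to $\RealNormalvector$, hence tangential along $\Surface$, so $\ParaNormalvector_{,\indB}$ is tangential and expands as $\ParaNormalvector_{,\indB}=(\baseVector_\indA\cdot\ParaNormalvector_{,\indB})\,\baseVector^\indA=-\curvature_{\indA\indB}\,\baseVector^\indA$ by the definition $\curvature_{\indA\indB}=-\baseVector_\indA\cdot\ParaNormalvector_{,\indB}$. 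Combining this with the metric identity just proved and with \eqref{eq::real_weingarten},
\begin{equation*}
\RealWeingarten\circ\parametrization=-\bigl(\nabla\RealNormalvector\circ\parametrization\bigr)\cdot\bigl(\baseVector_\indB\otimes\baseVector^\indB\bigr)=-\ParaNormalvector_{,\indB}\otimes\baseVector^\indB=\curvature_{\indA\indB}\,\baseVector^\indA\otimes\baseVector^\indB=\ParaWeingarten,
\end{equation*}
which is the claim. I expect the only genuine subtlety to be the orientation of the normal — harmless for the metric identity, which is quadratic in $\RealNormalvector$, and fixed by convention for the Weingarten identity; everything else is index bookkeeping with the contraction convention of \eqref{eq::real_weingarten}, where one should also note that $\RealWeingarten=-\nabla\RealNormalvector\cdot\projector$ is independent of how $\phi$ (hence $\RealNormalvector$) is extended off $\Surface$, since only tangential derivatives of $\RealNormalvector$ enter.
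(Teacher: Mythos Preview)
Your proof is correct and follows essentially the same route as the paper. For the Weingarten identity your computation is line-for-line the paper's: use $\projector\circ\parametrization=\baseVector_\indA\otimes\baseVector^\indA$, the chain rule $\ParaNormalvector_{,\indA}=(\nabla\RealNormalvector\circ\parametrization)\cdot\baseVector_\indA$, and conclude $\RealWeingarten\circ\parametrization=-\ParaNormalvector_{,\indA}\otimes\baseVector^\indA=\ParaWeingarten$. For the metric identity the paper argues that $\ParaMetric$ and $\projector\circ\parametrization$ agree on the basis $(\baseVector_1,\baseVector_2,\ParaNormalvector)$, whereas you write the resolution of the identity $\mathbf I=\baseVector_\indA\otimes\baseVector^\indA+\ParaNormalvector\otimes\ParaNormalvector$ and subtract; these are two phrasings of the same fact. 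Your treatment is in fact slightly more careful than the paper's, which silently uses $\ParaNormalvector=\RealNormalvector\circ\parametrization$ without discussing the orientation; your observation that the metric identity is insensitive to the sign of the normal while the Weingarten identity requires fixing the orientation convention is a worthwhile addition.
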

\subsection{Surface gradient}
The surface gradient of a tensor-valued function represented with respect to parametric coordinates by the map $\hat {\mathbf f}\colon \refArea\rightarrow \mathcal T$ is given by
\begin{equation}
\nabla_\Surface \hat {\mathbf f} = \hat {\mathbf f}_{,\indA} \otimes \baseVector^\indA.
\end{equation}
\begin{lemma}
For the representation ${\mathbf f}\colon \R^3\rightarrow \mathcal T$ the surface gradient is given by
\begin{equation}
\nabla_\Surface  {\mathbf f} =  \nabla \mathbf f \cdot \projector.
\end{equation}	
\end{lemma}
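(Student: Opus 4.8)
The plan is to transfer the parametric definition of the surface gradient to Cartesian coordinates via the chain rule and then to re-collect the result into $\nabla\mathbf f\cdot\projector$ using \Cref{thm::projectorMetric}. Throughout, the asserted identity is to be read pointwise on $\Surface$, \ie as an equality of tensor fields after composition with a parametrization $\parametrization$; since the implicit function theorem guarantees such a $\parametrization$ locally around every point of $\Surface$ (as already noted above), it suffices to verify the identity after composing with an arbitrary $\parametrization$, \ie to show $\nabla_\Surface\hat{\mathbf f}=(\nabla\mathbf f\cdot\projector)\circ\parametrization$ for $\hat{\mathbf f}=\mathbf f\circ\parametrization$.

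First I would record the tensor-valued chain rule, the analogue of \eqref{eq::relationDerivative}. With $\nabla\mathbf f=\mathbf f_{,i}\otimes\mathbf e^i$ and $\baseVector_\indA=\pfrac{\parametrization}{\theta^\indA}$ (whose $i$-th Cartesian component is $\partial(x_i\circ\parametrization)/\partial\theta^\indA$), componentwise differentiation of $\hat{\mathbf f}=\mathbf f\circ\parametrization$ gives
\begin{equation*}
\hat{\mathbf f}_{,\indA}=(\mathbf f_{,i}\circ\parametrization)\,\frac{\partial (x_i\circ\parametrization)}{\partial\theta^\indA}=(\nabla\mathbf f\circ\parametrization)\cdot\baseVector_\indA ,
\end{equation*}
where the dot contracts the $\mathbf e^i$-slot of $\nabla\mathbf f$ with $\baseVector_\indA$. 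Substituting this into the definition $\nabla_\Surface\hat{\mathbf f}=\hat{\mathbf f}_{,\indA}\otimes\baseVector^\indA$ and keeping track of which factor carries the free index $\indA$ yields
\begin{equation*}
\nabla_\Surface\hat{\mathbf f}=\big((\nabla\mathbf f\circ\parametrization)\cdot\baseVector_\indA\big)\otimes\baseVector^\indA=(\nabla\mathbf f\circ\parametrization)\cdot\big(\baseVector_\indA\otimes\baseVector^\indA\big).
\end{equation*}
By \Cref{thm::projectorMetric}, $\baseVector_\indA\otimes\baseVector^\indA=\ParaMetric=\projector\circ\parametrization$, and hence $\nabla_\Surface\hat{\mathbf f}=(\nabla\mathbf f\circ\parametrization)\cdot(\projector\circ\parametrization)=(\nabla\mathbf f\cdot\projector)\circ\parametrization$, which is the claim.

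The computation is essentially bookkeeping, and the one point that needs care is the slot accounting in the contractions: one must check that the dot in $(\nabla\mathbf f\circ\parametrization)\cdot\baseVector_\indA$ acts on exactly the slot that is subsequently tensored with $\baseVector^\indA$, so that the block $\baseVector_\indA\otimes\baseVector^\indA$ can be collected and identified with $\projector\circ\parametrization$. Writing the right-hand side in components, $\mathbf f_{,i}\otimes\big((\mathbf e^i\cdot\baseVector_\indA)\baseVector^\indA\big)$, and using $\mathbf e^i\cdot\projector=(\mathbf e^i\cdot\baseVector_\indA)\baseVector^\indA$ makes this transparent. I would also remark that the right-hand side $\nabla\mathbf f\cdot\projector$ is independent of the chosen extension of $\mathbf f$ off $\Surface$, since it only involves derivatives of $\mathbf f$ in directions tangent to $\Surface$; this is what makes the statement well posed as an intrinsic surface operator.
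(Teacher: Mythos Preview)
Your proof is correct and follows exactly the same route as the paper: apply the chain rule \eqref{eq::relationDerivative} to write $\hat{\mathbf f}_{,\indA}=(\nabla\mathbf f\circ\parametrization)\cdot\baseVector_\indA$, substitute into the parametric definition of $\nabla_\Surface$, and collect $\baseVector_\indA\otimes\baseVector^\indA=\projector\circ\parametrization$ via \Cref{thm::projectorMetric}. Your additional remarks on slot accounting and extension independence are correct but not present in the paper's (very terse) proof.
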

\begin{proof} Using the relation between the projector and the metric tensor and \eqref{eq::relationDerivative} we have
\begin{align}
\nabla_\Surface  {\mathbf f} \circ \parametrization = (\nabla \mathbf f \circ \parametrization ) \cdot (\baseVector_\indA \otimes \baseVector^\indA) = \hat {\mathbf f}_{,\indA} \otimes \baseVector^\indA.
\end{align}	
\end{proof}
\subsection{Surface divergence}
We define the surface divergence as the adjoint operator to the surface gradient \cite{rosenberg1997laplacian}. Therefore, on an Riemannian manifold we have in local coordinates
\begin{equation}
\text{div} \ParaTensor  = \frac{1}{\sqrt{\det\ParaMetric}} \left(\ParaTensor\cdot\baseVector^\indA\sqrt{\det\ParaMetric}\right)_{,\indA},
\end{equation}
where we use the notation $\det\ParaMetric = \det([\paraMetric_{\indA\indB}])$. The next lemma gives the simpler representation for the surface divergence in case of a surface embedded in $\R^3$. 
\begin{lemma}\label{thm::surfaceDivergence}
	On a surface $\Omega\subset \R^3$ parametrized by $\parametrization:\refArea\rightarrow \Omega$ the surface divergence of a tensor-valued function represented by $\ParaTensor\colon \refArea\rightarrow \mathcal T$ is given by
	\begin{equation}
	\textup{div}\ParaTensor  = \ParaTensor_{,\indA}\cdot\baseVector^\indA + H \, \ParaTensor \cdot \ParaNormalvector.
	\end{equation}
	Furthermore, for the representation $\RealTensor\colon \R^3\rightarrow \mathcal T$ we have 
	\begin{equation}
	\textup{div} \mathbf T = \nabla \mathbf T:\projector + H\,\mathbf T \cdot \RealNormalvector.
	\end{equation}
	The proof can be found in \appendexref{sec::proofs}.
\end{lemma}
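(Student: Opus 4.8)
The plan is to expand the coordinate definition of the surface divergence directly. Applying the product rule in
\[
\textup{div}\ParaTensor=\frac{1}{\sqrt{\det\ParaMetric}}\bigl(\ParaTensor\cdot\baseVector^\indA\sqrt{\det\ParaMetric}\,\bigr)_{,\indA}
\]
produces three contributions,
\[
\textup{div}\ParaTensor=\ParaTensor_{,\indA}\cdot\baseVector^\indA\;+\;\ParaTensor\cdot\baseVector^\indA_{,\indA}\;+\;\frac{\bigl(\sqrt{\det\ParaMetric}\,\bigr)_{,\indA}}{\sqrt{\det\ParaMetric}}\;\ParaTensor\cdot\baseVector^\indA .
\]
The first term already coincides with the tangential part of the claimed expression, so the whole statement reduces to showing that the remaining two terms combine to $H\,\ParaTensor\cdot\ParaNormalvector$.

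For this I would invoke two standard facts. First, Jacobi's formula for the derivative of a determinant together with metric compatibility, $\partial_\indA\metricS{\indB\indC}=\metricS{\indD\indC}\christbS{\indA\indB}{\indD}+\metricS{\indB\indD}\christbS{\indA\indC}{\indD}$, yields the classical identity $\partial_\indA\log\sqrt{\det\ParaMetric}=\christbS{\indB\indA}{\indB}$. Second, contracting $\indA$ with $\indB$ in the second line of \eqref{eq::derivativeBaseVector} gives $\baseVector^\indA_{,\indA}=-\christbS{\indA\indC}{\indA}\,\baseVector^\indC+\curvature^\indA_\indA\,\ParaNormalvector$. Substituting both, the two Christoffel contributions $-\christbS{\indA\indC}{\indA}\,\ParaTensor\cdot\baseVector^\indC$ and $+\christbS{\indB\indA}{\indB}\,\ParaTensor\cdot\baseVector^\indA$ cancel after relabelling the dummy indices, using the symmetry $\christbS{\indA\indB}{\indC}=\christbS{\indB\indA}{\indC}$. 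What remains is $\curvature^\indA_\indA\,\ParaTensor\cdot\ParaNormalvector=\hat H\,\ParaTensor\cdot\ParaNormalvector$, and \Cref{thm::projectorMetric} gives $\hat H=\text{tr}\,\ParaWeingarten=(\text{tr}\,\RealWeingarten)\circ\parametrization=H\circ\parametrization$, together with $\ParaNormalvector=\RealNormalvector\circ\parametrization$ for an appropriately oriented parametrization. This proves the first formula.

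For the embedding-space representation I would translate each term by the chain rule exactly as in \eqref{eq::relationDerivative}–\eqref{eq::relationDerivativeSecond}. Writing $\ParaTensor=\RealTensor\circ\parametrization$ gives $\ParaTensor_{,\indA}=(\nabla\RealTensor\circ\parametrization)\cdot\baseVector_\indA$, so that
\[
\ParaTensor_{,\indA}\cdot\baseVector^\indA=(\nabla\RealTensor\circ\parametrization):(\baseVector_\indA\otimes\baseVector^\indA)=(\nabla\RealTensor\circ\parametrization):\ParaMetric=(\nabla\RealTensor\circ\parametrization):(\projector\circ\parametrization)=(\nabla\RealTensor:\projector)\circ\parametrization ,
\]
using $\ParaMetric=\projector\circ\parametrization$ from \Cref{thm::projectorMetric} and the symmetry of $\projector$, which makes the slot ordering irrelevant. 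The normal term transforms directly by $\ParaNormalvector=\RealNormalvector\circ\parametrization$ and $\hat H=H\circ\parametrization$, and composing the resulting identity with $\parametrization^{-1}$ locally — which suffices, since the statement is pointwise and such parametrizations cover $\Surface$ — gives $\textup{div}\RealTensor=\nabla\RealTensor:\projector+H\,\RealTensor\cdot\RealNormalvector$.

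I expect no genuine difficulty here. The one place that needs care is the contraction bookkeeping: making sure that ``$\cdot$'' always contracts the last tensor slot and ``$:$'' the last two, that these are precisely the slots produced by $\nabla$ and by $\baseVector_\indA\otimes\baseVector^\indA$, and that the determinant/Christoffel identity is applied with the correct index placement. Everything else is routine differentiation.
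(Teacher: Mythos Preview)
Your proposal is correct and follows essentially the same route as the paper: expand the coordinate definition by the product rule, establish $\partial_\indA\log\sqrt{\det\ParaMetric}=\christbS{\indB\indA}{\indB}$, combine it with the contracted form of \eqref{eq::derivativeBaseVector} so that the Christoffel terms cancel and only $H\,\ParaTensor\cdot\ParaNormalvector$ survives, and then pull the result back to the embedding space via \eqref{eq::relationDerivative} and \Cref{thm::projectorMetric}. The only cosmetic difference is that the paper derives the determinant identity by writing $\sqrt{\det\ParaMetric}=\det(\baseVector_1,\baseVector_2,\ParaNormalvector)$ and differentiating columnwise, whereas you cite Jacobi's formula plus metric compatibility; both are standard and equivalent.
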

In the following lemma we collect product rules for the divergence operator.
\begin{lemma}\label{thm::productRules}
	Let $\mathbf v\times\mathbf T$ be the cross product of a vector $\mathbf v = v_i \mathbf e^i$ and a second order tensor $\mathbf T = T_{lk} \mathbf e^l \otimes \mathbf e^k$ defined by
	\begin{equation}
	\mathbf v\times\mathbf T = v_iT_{lk} (\mathbf e^i\times\mathbf e^l) \otimes \mathbf e^k,
	\end{equation}
	and $\mathbf V \ctimes\mathbf T$ the scalar-cross product of two second order tensors $\mathbf V = V_{ij} \mathbf e^i \otimes \mathbf e^j$ and $\mathbf T = T_{lk} \mathbf e^l \otimes \mathbf e^k$ defined by
	\begin{equation}
	\mathbf V\ctimes\mathbf T = V_{ij}T_{lk} (\mathbf e^j\cdot\mathbf e^l)(\mathbf e^i\times\mathbf e^k).
	\end{equation}
	Then, the following product rules hold
	\begin{align}
	\textup{div}(\mathbf v\times\mathbf T) &= \mathbf v\times\textup{div}(\mathbf T) + \nabla_\Omega \mathbf v\ctimes\mathbf T^\top, \label{eq::divergenceProductCross}\\
	\textup{div}(\mathbf v\cdot\mathbf T) &= \mathbf v\cdot\textup{div}(\mathbf T) + \nabla_\Omega \mathbf v:\mathbf T^\top . \label{eq::divergenceProductDot}
	\end{align}
	The proof can be found in \appendexref{sec::proofs}. 
\end{lemma}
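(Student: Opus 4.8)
The plan is to reduce both identities to the local-coordinate divergence formula of \Cref{thm::surfaceDivergence} together with the ordinary Leibniz rule. Every object appearing in \eqref{eq::divergenceProductCross}--\eqref{eq::divergenceProductDot} is a tensor field on $\Omega$, so it suffices to verify that the two sides agree after composition with an arbitrary local parametrization $\parametrization$. I would therefore represent $\mathbf v$ and $\mathbf T$ by their parametric counterparts $\upara$ and $\ParaTensor$, note that $\upara\times\ParaTensor$ and $\upara\cdot\ParaTensor$ are again parametric representations of tensor fields of the appropriate order, and apply $\textup{div}\ParaTensor = \ParaTensor_{,\indA}\cdot\baseVector^\indA + H\,\ParaTensor\cdot\ParaNormalvector$ to these products.

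For \eqref{eq::divergenceProductDot} this gives
\begin{align*}
\textup{div}(\upara\cdot\ParaTensor) &= (\upara\cdot\ParaTensor)_{,\indA}\cdot\baseVector^\indA + H\,(\upara\cdot\ParaTensor)\cdot\ParaNormalvector\\
&= (\upara_{,\indA}\cdot\ParaTensor)\cdot\baseVector^\indA + \upara\cdot\bigl(\ParaTensor_{,\indA}\cdot\baseVector^\indA\bigr) + H\,\upara\cdot(\ParaTensor\cdot\ParaNormalvector),
\end{align*}
using only the product rule for $\partial_{\theta^\indA}$ and the associativity $(\upara\cdot\ParaTensor)\cdot\mathbf a = \upara\cdot(\ParaTensor\cdot\mathbf a)$ for a vector $\mathbf a$. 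The last two terms reassemble, again by \Cref{thm::surfaceDivergence}, into $\upara\cdot\bigl(\ParaTensor_{,\indA}\cdot\baseVector^\indA + H\,\ParaTensor\cdot\ParaNormalvector\bigr) = \upara\cdot\textup{div}\ParaTensor$. It then remains to identify the first term: since $\nabla_\Omega\mathbf v$ is represented by $\upara_{,\indA}\otimes\baseVector^\indA$, a short computation in the Cartesian basis shows $(\upara_{,\indA}\cdot\ParaTensor)\cdot\baseVector^\indA = (\upara_{,\indA}\otimes\baseVector^\indA):\ParaTensor^\top$, which is the parametric representation of $\nabla_\Omega\mathbf v:\mathbf T^\top$, establishing \eqref{eq::divergenceProductDot}.

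The cross-product identity \eqref{eq::divergenceProductCross} follows along the same lines, with $\cdot$ between $\upara$ and $\ParaTensor$ replaced by $\times$: the Leibniz rule splits $(\upara\times\ParaTensor)_{,\indA}$ into $\upara_{,\indA}\times\ParaTensor + \upara\times\ParaTensor_{,\indA}$, the elementary identity $(\upara\times\ParaTensor)\cdot\mathbf a = \upara\times(\ParaTensor\cdot\mathbf a)$ lets the $\upara\times\ParaTensor_{,\indA}$--term and the curvature term collapse into $\upara\times\textup{div}\ParaTensor$, and the leftover $(\upara_{,\indA}\times\ParaTensor)\cdot\baseVector^\indA$ is matched against $\nabla_\Omega\mathbf v\ctimes\mathbf T^\top$ directly from the definitions of $\times$ and $\ctimes$ in the statement. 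As an alternative one could work entirely in the embedding space through the second formula of \Cref{thm::surfaceDivergence}, $\textup{div}\mathbf T = \nabla\mathbf T:\projector + H\,\mathbf T\cdot\RealNormalvector$, combined with the full-space product rule for $\nabla$ and the relation $\nabla_\Omega\mathbf v = \nabla\mathbf v\cdot\projector$; the two routes are equivalent.

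The only genuine work is the purely algebraic bookkeeping in the final step of each case: tracking which slot of $\mathbf T$ gets contracted and where the transpose enters, so that the residual derivative term lands precisely on $\nabla_\Omega\mathbf v:\mathbf T^\top$ respectively $\nabla_\Omega\mathbf v\ctimes\mathbf T^\top$ rather than on their untransposed versions. Everything else is a mechanical combination of the Leibniz rule with \Cref{thm::surfaceDivergence}, so I expect no conceptual obstacle beyond getting these index and transpose conventions consistent with the ones fixed in \Cref{thm::productRules}.
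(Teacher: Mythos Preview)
Your argument is correct. The paper's proof takes the route you describe as the ``alternative'': it works directly in the embedding space via the Cartesian form $\textup{div}\,\mathbf T = \nabla\mathbf T:\projector + H\,\mathbf T\cdot\RealNormalvector$ and the ordinary product rule for $\partial_{x_i}$, rather than the parametric form $\textup{div}\,\ParaTensor = \ParaTensor_{,\indA}\cdot\baseVector^\indA + H\,\ParaTensor\cdot\ParaNormalvector$ that you use as your primary approach. The two computations are in one-to-one correspondence (replace $\baseVector^\indA$ by $\mathbf e^i\cdot\projector$ throughout), and you correctly identify that the only nontrivial content in either version is the index/transpose bookkeeping that pins down $\mathbf T^\top$ in the residual term.
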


\stuff{
For a second order tensor we have
\begin{equation}
\begin{aligned}
\text{div} \mathbf T &= T^{\indA\indB}_{,\indB} \baseVector_{\alpha} + T^{\indA\indB} \baseVector_{\indA,\indB}  + T^{\indA\indB} \baseVector_{\indA} (\baseVector_{\indB,\indC}\cdot \baseVector^{\indC}) \\
&=T^{\indA\indB}_{,\indB} \baseVector_{\indA} + T^{\indA\indB} \Gamma_{\indA\indB}^{\indC} \baseVector_{\indC} + T^{\indA\indB} h_{\indA\indC}\RealNormalvector  + T^{\indA\indB} \Gamma_{\indB\indC}^{\indC}\baseVector_{\indA} 
\end{aligned}
\end{equation}
Also
\begin{equation}
\begin{aligned}
\text{div} (\mathbf a\otimes \mathbf b) &= \mathbf a_{,\indA} (\mathbf b \cdot \baseVector^{\alpha}) + \mathbf a( \mathbf b_{,\indA}\cdot \baseVector^{\alpha}) + h_\indA^\indA \mathbf a(\mathbf b \cdot \RealNormalvector)
\end{aligned}
\end{equation}
and
\begin{equation}
\begin{aligned}
\text{div} (\RealNormalvector\otimes \mathbf S) &= 
\RealNormalvector_{,\indA} (\mathbf S \cdot \baseVector^{\alpha}) + \RealNormalvector (\mathbf S_{,\indA}\cdot \baseVector^{\alpha}) \\
&= -h_\indA^\indB  S^\indA \baseVector_{\indB} + \RealNormalvector (S^\indB_{,\indA} \baseVector_{\indB} + S^\indB\baseVector_{\indB,\indA} ) \cdot \baseVector^{\alpha} \\
&= -h_\indA^\indB  S^\indA \baseVector_{\indB} + \RealNormalvector (S^\indA_{,\indA} + S^\indB \Gamma_{\indA\indB}^\indA ) 
\end{aligned}
\end{equation}
\begin{equation}
\begin{aligned}
\text{div} (\RealNormalvector\otimes \text{div} \mathbf T) &= 
\RealNormalvector_{,\indA} (\text{div} \mathbf T \cdot \baseVector^{\alpha}) +\RealNormalvector ( (\text{div} \mathbf T)_{,\indA}\cdot \baseVector^{\alpha}) + h_\indA^\indA \RealNormalvector( \text{div} \mathbf T \cdot \RealNormalvector) \\
&= -h_\indA^\indB \baseVector_{\indB}\left(T^{\indA\indC}_{,\indC}+T^{\indC\indB} \Gamma_{\indC\indB}^{\indA}+T^{\indA\indC} \Gamma_{\indC\indD}^{\indD}\right) \\
&+ \RealNormalvector \left[T^{\indA\indB}_{,\indB\indA} + T^{\indC\indB}_{,\indB} \Gamma_{\indC\indA}^\indA + T^{\indA\indB}_{,\indA}(\Gamma_{\indA\indB}^\indA+\Gamma_{\indB\indC}^{\indC}) + T^{\indA\indB}(\Gamma_{\indA\indB}^{\indC} \baseVector_{\indC} + h_{\indA\indC}\RealNormalvector  + \Gamma_{\indB\indC}^{\indC}\baseVector_{\indA})_{,\indA}\cdot \baseVector^{\alpha}  \right] \\
&+ h_\indA^\indA \RealNormalvector T^{\indA\indB} h_{\indA\indC}
\end{aligned}
\end{equation}
}
\subsection{Integral identities}
For further use we introduce the surface divergence theorem for a tensor-valued function $\RealTensor$
\begin{equation}\label{eq::divergenceTheorem}
 \begin{aligned}
 \int_\Omega \text{div} \,\RealTensor \ddx = \int_{\Gamma} \RealTensor \cdot \pmb \mu \ddsx.  
 \end{aligned}
\end{equation} 
Using \eqref{eq::divergenceProductDot} and \eqref{eq::divergenceTheorem}, the integration by parts formula for a vector $\mathbf v$ and a second order tensor $\RealTensor$ reads
\begin{equation}
\begin{aligned}
\int_\Surface\mathbf v\cdot \text{div}\mathbf T \ddx = \int_{\Gamma} \mathbf v \cdot\mathbf T \cdot \pmb \mu \dd s_x - \int_\Surface \nabla_\Surface \mathbf v : \mathbf T^\top  \ddx. 
\end{aligned}
\end{equation}
\section{The linear thin shell problem}
In this section we derive the governing equations of linear thin shells from first principles of continuum mechanics. Furthermore, we show the equivalence to the linear Koiter model formulated as a minimization problem.
\subsection{Shell kinematics}
The kinematics of the surface $\Surface$ is described by the change in metric tensor and the change in curvature tensor. In the present paper we focus on the linear theory and use the linearized change in metric tensor $\ParaChangeMetric = \paraChangeMetric_{\alpha\beta} \,\baseVector^\indA \otimes \baseVector^\indB$ and the linearized change in curvature tensor $\ParaChangeCurvature = \paraChangeCurvature_{\alpha\beta} \,\baseVector^\indA \otimes \baseVector^\indB$. The respective covariant components are given by \cite{ciarlet2006,blouza1999existence}
\begin{equation}
\paraChangeMetric_{\alpha\beta}(\upara) = \frac{1}{2} (\upara_{,\beta} \cdot \baseVector_\alpha + \upara_{,\alpha} \cdot \baseVector_\beta),
\end{equation}
and
\begin{equation}
\paraChangeCurvature_{\alpha\beta}(\upara) = \ParaNormalvector\cdot\left(\upara_{,\alpha\beta} - \Gamma_{\alpha\beta}^\sigma \upara_{,\sigma}\right).
\end{equation}
The next lemma establishes the representations for $\RealChangeMetric = \ParaChangeMetric \circ \parametrization^{-1}$ and $\RealChangeCurvature = \ParaChangeCurvature \circ \parametrization^{-1}$.  
\begin{lemma}\label{thm::changeTensors}
	For the linearized change in metric tensor we have the representation 
	\begin{equation}
	\RealChangeMetric(\ureal) = \frac{1}{2} \projector \cdot(\nabla \ureal + (\nabla \ureal)^\top) \cdot\projector, 
	\end{equation}
	and for the linearized change in curvature tensor we have the representation 
	\begin{equation}
	\pmb \rho(\ureal) =\projector\cdot(\RealNormalvector\cdot\nabla\nabla \mathbf u)\cdot\projector  - ( \RealNormalvector\cdot\nabla \ureal \cdot \RealNormalvector) \RealWeingarten.
	\end{equation}
	The proof can be found in \appendexref{sec::proofs}.
\end{lemma}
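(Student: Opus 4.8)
The plan is to start from the coordinate-wise definitions of $\paraChangeMetric_{\alpha\beta}$ and $\paraChangeCurvature_{\alpha\beta}$, assemble the intrinsic tensors $\ParaChangeMetric=\paraChangeMetric_{\alpha\beta}\,\baseVector^\alpha\otimes\baseVector^\beta$ and $\ParaChangeCurvature=\paraChangeCurvature_{\alpha\beta}\,\baseVector^\alpha\otimes\baseVector^\beta$, eliminate the parametric derivatives $\upara_{,\alpha}$ and $\upara_{,\alpha\beta}$ in favour of the ambient gradient $\nabla\ureal$ and Hessian $\nabla\nabla\ureal$ via \eqref{eq::relationDerivative}--\eqref{eq::relationDerivativeSecond}, and then compose with $\parametrization^{-1}$, invoking \Cref{thm::projectorMetric} to replace $\ParaMetric$ by $\projector$ and $\ParaWeingarten$ by $\RealWeingarten$ on $\Surface$. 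The one algebraic fact I will use repeatedly is that, for every second-order tensor $\mathbf A$,
\[
(\baseVector_\alpha\cdot\mathbf A\cdot\baseVector_\beta)\,\baseVector^\alpha\otimes\baseVector^\beta=(\baseVector^\alpha\otimes\baseVector_\alpha)\cdot\mathbf A\cdot(\baseVector_\beta\otimes\baseVector^\beta)=\ParaMetric\cdot\mathbf A\cdot\ParaMetric,
\]
which is immediate from $\ParaMetric=\baseVector_\alpha\otimes\baseVector^\alpha=\baseVector^\alpha\otimes\baseVector_\alpha$ as recorded in \Cref{thm::projectorMetric}.

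For the change-of-metric tensor the computation is short. By \eqref{eq::relationDerivative}, $\upara_{,\beta}\cdot\baseVector_\alpha=\baseVector_\alpha\cdot(\nabla\ureal\circ\parametrization)\cdot\baseVector_\beta$, so symmetrising in $\alpha,\beta$ gives $\paraChangeMetric_{\alpha\beta}=\frac{1}{2}\,\baseVector_\alpha\cdot\bigl((\nabla\ureal+(\nabla\ureal)^\top)\circ\parametrization\bigr)\cdot\baseVector_\beta$. Multiplying by $\baseVector^\alpha\otimes\baseVector^\beta$, using the displayed identity with $\mathbf A=(\nabla\ureal+(\nabla\ureal)^\top)\circ\parametrization$, composing with $\parametrization^{-1}$ and using $\ParaMetric=\projector\circ\parametrization$, then produces $\RealChangeMetric(\ureal)=\frac{1}{2}\,\projector\cdot(\nabla\ureal+(\nabla\ureal)^\top)\cdot\projector$.

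For the change-of-curvature tensor I would insert \eqref{eq::relationDerivative} and the last line of \eqref{eq::relationDerivativeSecond} into $\paraChangeCurvature_{\alpha\beta}(\upara)=\ParaNormalvector\cdot\bigl(\upara_{,\alpha\beta}-\Gamma^\sigma_{\alpha\beta}\upara_{,\sigma}\bigr)$. The crucial point is a cancellation: the tangential term $\Gamma^\sigma_{\alpha\beta}\,\baseVector_\sigma$ arising from $\baseVector_{\alpha,\beta}$ in \eqref{eq::derivativeBaseVector} is annihilated exactly by $-\Gamma^\sigma_{\alpha\beta}\upara_{,\sigma}=-\Gamma^\sigma_{\alpha\beta}(\nabla\ureal\circ\parametrization)\cdot\baseVector_\sigma$, leaving only the ambient Hessian double-contracted with $\baseVector_\alpha,\baseVector_\beta$ together with a term proportional to $\curvature_{\alpha\beta}$ whose coefficient is $(\nabla\ureal\circ\parametrization)\cdot\ParaNormalvector$. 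Contracting with $\ParaNormalvector$ and shifting that contraction onto the first slot of the Hessian, which is permitted because $\nabla\nabla\ureal$ is symmetric in its last two slots, rewrites $\paraChangeCurvature_{\alpha\beta}$ as $\baseVector_\alpha\cdot\bigl((\RealNormalvector\cdot\nabla\nabla\ureal)\circ\parametrization\bigr)\cdot\baseVector_\beta$ plus a multiple of $\curvature_{\alpha\beta}$ by the scalar $(\RealNormalvector\cdot\nabla\ureal\cdot\RealNormalvector)\circ\parametrization$. Assembling $\ParaChangeCurvature$, the Hessian contribution becomes $\ParaMetric\cdot\bigl((\RealNormalvector\cdot\nabla\nabla\ureal)\circ\parametrization\bigr)\cdot\ParaMetric$ by the displayed identity, and the $\curvature_{\alpha\beta}$ contribution becomes $\bigl((\RealNormalvector\cdot\nabla\ureal\cdot\RealNormalvector)\circ\parametrization\bigr)\,\ParaWeingarten$ since $\curvature_{\alpha\beta}\,\baseVector^\alpha\otimes\baseVector^\beta=\ParaWeingarten$. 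Composing with $\parametrization^{-1}$ and using $\ParaMetric=\projector\circ\parametrization$, $\ParaWeingarten=\RealWeingarten\circ\parametrization$ from \Cref{thm::projectorMetric} then yields the stated representation of $\pmb \rho(\ureal)$.

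The step I expect to be the main obstacle is the manipulation of \eqref{eq::relationDerivativeSecond}: one must keep careful track of which of the two trailing slots of the third-order object $\nabla\nabla\ureal$ is paired with $\baseVector_\alpha$ and which with $\baseVector_\beta$, invoke the symmetry $\ureal_{,ij}=\ureal_{,ji}$ exactly where it is needed, verify that the Christoffel contributions cancel identically rather than only up to lower-order terms, and fix the sign of the $\curvature_{\alpha\beta}$-term consistently with the orientation conventions underlying \Cref{thm::projectorMetric}. As a consistency check I would confirm a posteriori that both right-hand sides depend on $\ureal$ only through its trace on $\Surface$, even though $\nabla\ureal$ and $\nabla\nabla\ureal$ separately depend on the chosen extension of $\ureal$ off $\Surface$; this guarantees that precisely the right combinations of ambient derivatives have been kept.
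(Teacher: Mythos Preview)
Your approach is correct and is essentially the same as the paper's: both use the chain-rule relations \eqref{eq::relationDerivative}--\eqref{eq::relationDerivativeSecond} together with \Cref{thm::projectorMetric} to pass between the parametric and ambient expressions, exploiting the exact cancellation of the Christoffel contributions in $\paraChangeCurvature_{\alpha\beta}$. The only cosmetic difference is the direction of the computation---the paper composes the claimed ambient formula with $\parametrization$ and checks that it reproduces $\ParaChangeMetric$ and $\ParaChangeCurvature$, while you start from the parametric definitions and assemble the ambient formula; this is the same calculation read in reverse.
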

\subsection{Stress and moment tensors}
We define the traction vector $\mathbf t(\pmb \mu)$ on a cut defined by the boundary normal $\pmb \mu$ tangential to the surface. Due to Cauchy’s theorem we have the representation 
\begin{equation}
\mathbf t(\pmb \mu) = \pmb \sigma \cdot \pmb \mu, 
\end{equation}
with the stress tensor $\pmb \sigma$.
We decompose the stress tensor $\pmb \sigma$ in a tangential and a normal part,
\begin{equation}\label{eq::stressTensor}
\pmb \sigma = \mathbf N + \RealNormalvector\otimes\mathbf S, 
\end{equation}
with the tangential stress tensor $\mathbf N = N^{\indA\indB} \RealBaseVector_\indA\otimes\RealBaseVector_\indB$ and the vector $\mathbf S = S^{\indA} \RealBaseVector_\indA$ related to transverse shear. Analogously, we have a moment vector $\mathbf m(\pmb \mu)$ on the cut defined by $\pmb \mu$, which can be expressed as 
\begin{equation}
 \mathbf m(\pmb \mu) = \n\times (\mathbf M \cdot \pmb \mu), 
\end{equation} 
 with the tangential moment tensor $\mathbf M$. 
\subsection{Equilibrium of forces}
The equilibrium of forces states that the sum of the resulting force of boundary traction and the resultant force from the surface loading vanishes,  
\begin{equation}
\begin{aligned}
\int_{\Gamma} \mathbf t \ddsx + \int_\Omega \mathbf b \ddx = 0.
\end{aligned}
\end{equation}
Applying the surface divergence theorem \eqref{eq::divergenceTheorem} results in 
\begin{equation}
\begin{aligned}
\int_{\Omega} \text{div}\,\pmb\sigma + \bodyforce \ddx = 0.
\end{aligned}
\end{equation}
Due to the fact that $\Omega$ is arbitrary the local force equilibrium reads
\begin{equation}\label{eq::local_force_equilibrium}
 \text{div}\,\pmb\sigma + \mathbf b  = 0.
\end{equation}
\subsection{Equilibrium of moments}
The equilibrium of moments states that the sum of boundary moments, the moments of boundary tractions, and the moments due to surface loads vanishes,
\begin{equation}
\begin{aligned}
\int_{\Gamma}\mathbf m + \mathbf x \times \mathbf t \dd s_x + \int_{\Surface}\mathbf x \times \mathbf b \ddx = 0.
\end{aligned}
\end{equation}
The following lemma summarizes the consequences of the equilibrium of moments.
\begin{lemma}\label{thm::equilibirumMoments}
	For $\RealTensor = \realTensor_{ij} \mathbf e^i \otimes \mathbf e^j$ let $[\RealTensor]_{\times} = \realTensor_{ij}\mathbf e^i \times \mathbf e^j$. The equilibrium of moments is fulfilled if 
	\begin{equation}\label{eq::symetryMoment}
	[-\mathbf H\cdot\mathbf M + \mathbf N^\top]_{\times}  = 0,
	\end{equation}
	and
	\begin{equation}\label{eq::transverseForce}
	\mathbf S = \projector\cdot\textup{div}(\mathbf M).
	\end{equation}
	The proof can be found in \appendexref{sec::proofs}.
\end{lemma}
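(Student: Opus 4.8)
The plan is to push the two boundary integrals in the moment balance to the surface by the surface divergence theorem \eqref{eq::divergenceTheorem}, to eliminate the force contributions by means of the local force equilibrium \eqref{eq::local_force_equilibrium}, and then to localize and split the resulting pointwise identity into a normal and a tangential part.

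First I would bring the boundary integrand into the form $(\cdot)\cdot\pmb\mu$. Expanding in the Cartesian basis, the vector--tensor cross product of \Cref{thm::productRules} gives $\mathbf m(\pmb\mu)=\RealNormalvector\times(\mathbf M\cdot\pmb\mu)=(\RealNormalvector\times\mathbf M)\cdot\pmb\mu$ and $\pos\times\mathbf t(\pmb\mu)=\pos\times(\pmb\sigma\cdot\pmb\mu)=(\pos\times\pmb\sigma)\cdot\pmb\mu$, so that \eqref{eq::divergenceTheorem} recasts the moment balance as
\begin{equation*}
\int_{\Surface}\Big(\text{div}(\RealNormalvector\times\mathbf M)+\text{div}(\pos\times\pmb\sigma)+\pos\times\bodyforce\Big)\ddx=0 .
\end{equation*}
Applying the product rule \eqref{eq::divergenceProductCross} to each divergence, together with $\nabla_\Surface\pos=\nabla\pos\cdot\projector=\projector$ and, by \eqref{eq::real_weingarten}, $\nabla_\Surface\RealNormalvector=\nabla\RealNormalvector\cdot\projector=-\RealWeingarten$, yields
\begin{align*}
\text{div}(\pos\times\pmb\sigma)&=\pos\times\text{div}\,\pmb\sigma+\projector\ctimes\pmb\sigma^\top,\\
\text{div}(\RealNormalvector\times\mathbf M)&=\RealNormalvector\times\text{div}\,\mathbf M-\RealWeingarten\ctimes\mathbf M^\top .
\end{align*}
The force part $\pos\times\text{div}\,\pmb\sigma+\pos\times\bodyforce=\pos\times(\text{div}\,\pmb\sigma+\bodyforce)$ vanishes by \eqref{eq::local_force_equilibrium}; since the remaining integral vanishes over every admissible subdomain (as in the localization leading to \eqref{eq::local_force_equilibrium}), I obtain the pointwise identity $\RealNormalvector\times\text{div}\,\mathbf M-\RealWeingarten\ctimes\mathbf M^\top+\projector\ctimes\pmb\sigma^\top=0$.

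It remains to decompose this identity into its components along $\RealNormalvector$ and tangent to $\Surface$ using the splitting \eqref{eq::stressTensor}. Since $\mathbf N$, $\mathbf S$, $\mathbf M$ and $\RealWeingarten$ are tangential, $\projector\cdot\pmb\sigma^\top=\mathbf N^\top+\mathbf S\otimes\RealNormalvector$, whence $\projector\ctimes\pmb\sigma^\top=[\mathbf N^\top]_\times+\mathbf S\times\RealNormalvector$ and $\RealWeingarten\ctimes\mathbf M^\top=[\RealWeingarten\cdot\mathbf M^\top]_\times$, so the identity becomes
\begin{equation*}
\RealNormalvector\times\big(\projector\cdot\text{div}\,\mathbf M-\mathbf S\big)+\big[\mathbf N^\top-\RealWeingarten\cdot\mathbf M^\top\big]_\times=0 .
\end{equation*}
The decisive observation is that $\RealNormalvector\times(\cdot)$ is tangential, whereas $[\mathbf T]_\times$ is parallel to $\RealNormalvector$ for every tangential second-order tensor $\mathbf T$ (in a local tangent frame it equals $(T^{12}-T^{21})\,\RealBaseVector_1\times\RealBaseVector_2$); hence the two summands vanish separately. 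From the tangential one, since $\projector\cdot\text{div}\,\mathbf M$ and $\mathbf S$ are both tangential, I get $\mathbf S=\projector\cdot\text{div}(\mathbf M)$, that is \eqref{eq::transverseForce}; the normal one is \eqref{eq::symetryMoment} (using the symmetry of the moment tensor, $\mathbf M^\top=\mathbf M$).

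The hard part will be purely the bookkeeping: keeping the three cross-type operations ($\times$ between a vector and a tensor, $\ctimes$ between two tensors, and $[\cdot]_\times$) and all transposes under control through the product rule, and remembering that $\text{div}\,\mathbf M$ is \emph{not} tangential---by \Cref{thm::surfaceDivergence} it carries a nonzero normal component---which is exactly why only its projection $\projector\cdot\text{div}(\mathbf M)$ enters \eqref{eq::transverseForce}. Establishing the clean normal/tangential decomposition of the localized identity is the conceptual heart of the argument; the rest is routine index manipulation.
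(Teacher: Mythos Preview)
Your proposal is correct and follows essentially the same route as the paper: apply the surface divergence theorem and the product rule \eqref{eq::divergenceProductCross}, eliminate the force part via \eqref{eq::local_force_equilibrium}, localize, and split into tangential and normal components. Your treatment is in fact a little more careful than the paper's in two places: you make explicit that only $\projector\cdot\text{div}(\mathbf M)$ survives in \eqref{eq::transverseForce} (the paper silently uses $\RealNormalvector\times(\RealNormalvector\cdot\text{div}\,\mathbf M)\,\RealNormalvector=0$), and you flag that passing from $[\RealWeingarten\cdot\mathbf M^\top]_\times$ to $[\RealWeingarten\cdot\mathbf M]_\times$ uses the symmetry of $\mathbf M$, which the paper's index computation also uses without comment.
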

\subsection{Constitutive equations}
In the present paper we assume linear constitutive equations of the form
\begin{equation}\label{eq::constitutionMoment}
\mathbf M = -\frac{t^3}{12}\,\mathcal E : \pmb \rho, 
\end{equation}
and
\begin{equation}\label{eq::constitutionNormal}
\mathbf N = t\,\mathcal E : \pmb\gamma - \mathbf H\cdot\mathbf M.
\end{equation}
The fourth order elasticity tensor $\mathcal E$ is given by
\begin{equation}\label{eq::constitution}
\begin{aligned}
\mathcal E = \lambda (\mathbf P \otimes \mathbf P) + 2 \mu \mathcal P^s,\qquad \text{with} \quad \lambda =\frac{4\bar\lambda\mu}{\bar\lambda+2\mu}, 
\end{aligned}
\end{equation}
where $\bar\lambda$ and $\mu$ are the Lam\'e constants of the elastic material constituting the shell and $\mathcal P^s$ the symmetric part of the tangential fourth order identity tensor. The Lam\'e constants are related to the Young's modulus $E$ and Poisson's ratio $\nu$ by
\begin{equation}
\begin{aligned}
\bar \lambda = \frac{E\nu}{(1+\nu)(1-2\nu)}, \qquad \mu = \frac{E}{2(1+\nu)}.
\end{aligned}
\end{equation}
The constitutive equations can also be write as
\begin{equation}\label{eq::constitutionAgain}
\begin{aligned}
\mathbf N &= \bar {\mathbf N} - \mathbf H \cdot \mathbf M, \\
\mathbf M &= -\frac{t^3}{12}(\lambda \, \projector \,\text{tr}\,\pmb \rho + 2\mu \,\pmb \rho), \\
\bar {\mathbf N} &= t(\lambda\, \projector \,\text{tr}\,\pmb \gamma + 2\mu \,\pmb \gamma).
\end{aligned}
\end{equation}
We remark that with \eqref{eq::constitutionNormal} the condition \eqref{eq::symetryMoment} is fulfilled identically.
\subsection{Weak form of the governing equations}
The weak form of the governing equations is given by
\begin{equation}\label{eq:weakForm}
\begin{aligned}
t\int_\Surface  & \pmb\gamma(\mathbf v):\mathcal E : \pmb\gamma(\mathbf u) \ddx + \frac{t^3}{12} \int_\Surface  \pmb\rho(\mathbf v):\mathcal E : \pmb\rho(\mathbf u)  \ddx = \int_\Surface \mathbf v \cdot \mathbf b \ddx\\ 
& + \int_{\Gamma_{N_i}} v_i \, N^N_i \dd s_x - \int_{\Gamma_{N_t}} \nabla_\Surface (\RealNormalvector\cdot\mathbf v) \cdot \mathbf t \; M^N_t \dd s_x - \int_{\Gamma_{N_\mu}} \nabla_\Surface (\RealNormalvector\cdot\mathbf v) \cdot \pmb \mu \; M^N_\mu \dd s_x,
\end{aligned}
\end{equation}
where $\mathbf v$ are appropriate test functions, see \Cref{sec::derivation_weak}. The boundary conditions which can be prescribed are given by,
\begin{equation}
\begin{aligned}
\mathbf u \cdot \mathbf e_i &=  u^D_i  \quad &\text{or} \quad &&
\mathbf e_i\cdot(\bar{\mathbf N} + \RealNormalvector \otimes \mathbf S) \cdot \pmb \mu &= N^N_i, \\
\nabla_\Surface(\RealNormalvector\cdot \mathbf u)\cdot  \mathbf t &=  \omega_{t} \quad &\text{or} \quad && 
\mathbf t \cdot \mathbf M \cdot \pmb \mu &=  M^N_t, \\
\nabla_\Surface(\RealNormalvector\cdot \mathbf u) \cdot \pmb \mu & =  \omega_{\mu} \quad &\text{or} \quad && 
\pmb \mu \cdot \mathbf M \cdot \pmb \mu &= M^N_\mu,
\end{aligned}
\end{equation}
where $u^D_i$ is a given displacement in the direction of $\mathbf e_i$, $\omega_{t}$ and $\omega_{\mu}$ are given rotations, $N^N_i$ is a given force in the direction of $\mathbf e_i$, $M^N_t$ and $M^N_\mu$ are given moments. 
If $\mathbf u$ is prescribed on the boundary, the derivative along the boundary $d_\mathbf t (\RealNormalvector\cdot\mathbf u ) = \nabla_\Surface (\RealNormalvector\cdot\ureal)\cdot\mathbf t$ is also prescribed \cite{basar1985}. Thus, in this case, only the normal derivative $d_{\pmb \mu} (\RealNormalvector\cdot\mathbf u ) = \nabla_\Surface (\RealNormalvector\cdot\ureal)\cdot\pmb \mu$ can be independently prescribed by $\omega_{\mu}$.

\stuff{
Definition of the divergence operator
\begin{equation}
\text{div} \mathbf F = \n \cdot \text{curl}(\n\times\mathbf F)
\end{equation}	

The divergence product rule
\begin{equation}
\text{div}(\mathbf T \cdot \mathbf v) = \text{div}(\mathbf T^\top)\cdot\mathbf v + \mathbf T^\top :\nabla_\Omega \mathbf v
\end{equation}
\begin{equation}
\text{div}(\mathbf v\cdot\mathbf T) = \mathbf v\cdot\text{div}(\mathbf T) + \nabla_\Omega \mathbf v:\mathbf T^\top
\end{equation}
\begin{equation}
\text{div}(\mathbf v\times\mathbf T) = \mathbf v\times\text{div}(\mathbf T) + \nabla_\Omega \mathbf v\ctimes\mathbf T^\top 
\end{equation}
with $(\mathbf a \otimes \mathbf b) \ctimes(\mathbf c \otimes \mathbf d) = (\mathbf b\cdot\mathbf c) (\mathbf a\times\mathbf d)$.
\begin{equation}
\text{div}(\mathbf u\otimes\mathbf v) = \mathbf u \;\text{div}(\mathbf v) + \nabla_\Omega \mathbf u\cdot\mathbf v 
\end{equation}
\begin{theorem}\label{the::adjoint}
	\begin{equation}
	\mathbf T :(\mathbf M\cdot\mathbf H) = (\mathbf H \cdot\mathbf T) : \mathbf M
	\end{equation}
\end{theorem}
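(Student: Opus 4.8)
The plan is to reduce the identity to the cyclic invariance of the trace. The only structural facts I need are that both $\mathbf M$ (the tangential moment tensor) and $\mathbf H$ (the Weingarten map) are symmetric, $\mathbf M=\mathbf M^\top$ and $\mathbf H=\mathbf H^\top$, together with the elementary observations that the single contraction $\cdot$ of two second order tensors is ordinary composition of linear maps and that the double contraction can be written as a trace of such a product. So nothing analytic is involved; the statement is pure linear algebra.

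First I would rewrite the left-hand side as $\mathbf T:(\mathbf M\cdot\mathbf H)=\text{tr}\big(\mathbf T\cdot(\mathbf M\cdot\mathbf H)^\top\big)=\text{tr}\big(\mathbf T\cdot\mathbf H^\top\cdot\mathbf M^\top\big)=\text{tr}(\mathbf T\cdot\mathbf H\cdot\mathbf M)$, using $(\mathbf M\cdot\mathbf H)^\top=\mathbf H^\top\cdot\mathbf M^\top$ and the symmetry of $\mathbf M$ and $\mathbf H$. In the same way the right-hand side becomes $(\mathbf H\cdot\mathbf T):\mathbf M=\text{tr}\big((\mathbf H\cdot\mathbf T)\cdot\mathbf M^\top\big)=\text{tr}(\mathbf H\cdot\mathbf T\cdot\mathbf M)$. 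It then remains to observe $\text{tr}(\mathbf T\cdot\mathbf H\cdot\mathbf M)=\text{tr}(\mathbf H\cdot\mathbf T\cdot\mathbf M)$, equivalently $\text{tr}\big((\mathbf T\cdot\mathbf H-\mathbf H\cdot\mathbf T)\cdot\mathbf M\big)=0$; since $\mathbf T\cdot\mathbf H-\mathbf H\cdot\mathbf T$ is skew-symmetric (the commutator of the symmetric $\mathbf H$ with $\mathbf T$, $\mathbf T$ being symmetric for every tensor to which this identity is applied) and $\mathbf M$ is symmetric, the trace of their product vanishes. As an entirely componentwise alternative, one expands in Cartesian coordinates so that $\mathbf T:(\mathbf M\cdot\mathbf H)=T_{ij}M_{jk}H_{ki}$ and $(\mathbf H\cdot\mathbf T):\mathbf M=H_{ij}T_{jk}M_{ki}$, and the relabelling of the dummy indices $i\to k,\ j\to i,\ k\to j$ in the second expression turns it into the first.

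The only real obstacle is bookkeeping: keeping track of which slot of which tensor contracts with which, and being explicit about where the symmetry of $\mathbf M$ and $\mathbf H$ enters, so that the transposes generated by rewriting ``$:$'' as a trace can be absorbed. Once the index convention for the double contraction is pinned down the computation is immediate.
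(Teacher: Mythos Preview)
Your componentwise argument is correct and is essentially the paper's proof. The paper expands both sides in the surface basis $(\hat{\mathbf g}_\alpha,\hat{\mathbf g}^\alpha)$, obtaining $T_{\alpha\beta}M^{\beta\delta}h_\delta^{\;\alpha}$ for the left-hand side and $h_\alpha^{\;\beta}T_{\beta\delta}M^{\delta\alpha}$ for the right-hand side, and closes by a relabelling of the dummy indices---exactly your Cartesian computation transplanted to co-/contravariant components. No symmetry of $\mathbf T$, $\mathbf M$ or $\mathbf H$ enters.

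Your trace-based argument, however, has a genuine gap. To conclude that $\mathbf T\cdot\mathbf H-\mathbf H\cdot\mathbf T$ is skew you need $\mathbf T$ symmetric, and the claim that $\mathbf T$ is ``symmetric for every tensor to which this identity is applied'' is false here: the identity is invoked in the derivation of the weak form with $\mathbf T=\nabla_\Omega\mathbf v$, the surface gradient of an arbitrary test function, which is not symmetric. There is also an internal inconsistency between your two approaches: your component expansion $\mathbf T:(\mathbf M\cdot\mathbf H)=T_{ij}M_{jk}H_{ki}$ corresponds to the convention $\mathbf A:\mathbf B=\text{tr}(\mathbf A\cdot\mathbf B)$, whereas your opening line $\mathbf T:(\mathbf M\cdot\mathbf H)=\text{tr}\big(\mathbf T\cdot(\mathbf M\cdot\mathbf H)^\top\big)$ uses $\mathbf A:\mathbf B=\text{tr}(\mathbf A\cdot\mathbf B^\top)$. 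Under the former convention---the one compatible with the paper's index calculation---the statement is simply $\text{tr}(\mathbf T\mathbf M\mathbf H)=\text{tr}(\mathbf H\mathbf T\mathbf M)$, i.e.\ cyclic invariance of the trace, and no symmetry hypothesis on any of the three tensors is needed. So the detour through commutators is both unjustified (for general $\mathbf T$) and unnecessary; keep the index-relabelling argument and drop the first one.
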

} 
\subsection{Equivalence to the Koiter model}
In this section the equivalence of the classical linear Koiter model \cite{koiter1966nonlinear,ciarlet2006} and \eqref{eq:weakForm} is outlined. To this end, we define the energy functional 
\begin{equation}\label{eq::energy}
\begin{aligned}
\mathcal E(\mathbf v) = \frac{1}{2} &\left[\int_{\Surface} t \,\pmb\gamma(\mathbf v) :\mathcal E:\pmb\gamma(\mathbf v)+ \frac{t^3}{12}\pmb\rho(\mathbf v):\mathcal E:\pmb\rho(\mathbf v) \ddx\right]  - \int_{\Surface} \bodyforce \cdot \mathbf v \ddx \\&- \int_{\Gamma_{N_i}} v_i \, N^N_i \dd s_x + \int_{\Gamma_{N_t}} \nabla_\Surface (\RealNormalvector\cdot\mathbf v) \cdot \mathbf t \; M^N_t \dd s_x + \int_{\Gamma_{N_\mu}} \nabla_\Surface (\RealNormalvector\cdot\mathbf v) \cdot \pmb \mu \; M^N_\mu \dd s_x.
\end{aligned}
\end{equation}
Then, the linear Koiter shell model reads: Find $\displacement \in \mathcal V$ such that 
\begin{equation}
\begin{aligned}
\mathcal E(\displacement) = \inf_{\mathbf v \in \mathcal V} \mathcal E(\mathbf v).
\end{aligned}
\end{equation}
Since the variational equations of \eqref{eq::energy} are the equations given in \eqref{eq:weakForm} we have established the equivalence. Therefore, we conclude that the Koiter model proposed out of purely mechanical and geometrical intuitions can be derived from first principles of continuum mechanics.
\section{$C^1$-Trace-Finite-Cell-Method}
For the discretization of the weak form \eqref{eq:weakForm} we propose a $C^1$-continuity version of the TraceFEM. Following the TraceFEM concept the ansatz space on the surface is defined as the restriction (trace) of an outer ansatz space defined on a background mesh. We label the present method also a Finite-Cell method because we use as a background mesh a Cartesian grid. On this structured grid we are able to construct $C^1$-continuity shape functions by the tensor product of univariate cubic Hermite form functions. Locally, they are defined on the on the unit interval by (see \Cref{fig::localShape})
\begin{equation}\label{eq::hermite}
\begin{aligned}
\varphi_1(\xi) &= 1 + \xi^2(2\xi-3), \\
\varphi_2(\xi) &= \xi(\xi(\xi-2)+1), \\
\varphi_3(\xi) &= -\xi^2(2\xi-3), \\
\varphi_4(\xi) &= \xi^2(\xi-1). 
\end{aligned}
\end{equation}
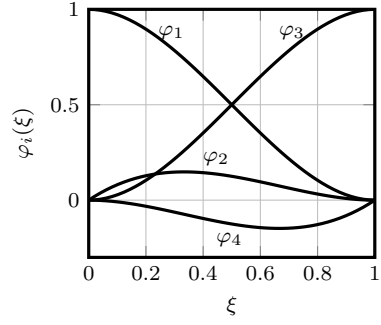
\begin{figure}[h]
	\centering
	\begin{tikzpicture}[]
	\pgfplotsset{%
		width=.45\textwidth,
		height=0.41\textwidth
	}
	\begin{axis}[very thick,xmin=0,xmax=1,ymin=-0.3,ymax=1,grid=both,xlabel={$ \xi$}, ylabel={$\varphi_i(\xi)$}]
	\addplot [domain=0:1,samples=50]({x},{2*x^3-3*x^2+1}) node[above,pos=0.25]{$\varphi_1$};
	\addplot [domain=0:1,samples=50]({x},{x^3-2*x^2+x}) node[above,pos=0.46]{$\varphi_2$}; ;  
	\addplot [domain=0:1,samples=50]({x},{-2*x^3+3*x^2}) node[above,pos=0.75]{$\varphi_3$};; 
	\addplot [domain=0:1,samples=50]({x},{x^3-x^2}) node[below,pos=0.48]{$\varphi_4$};; 
	\end{axis}
	\end{tikzpicture}
	\caption{Local univariate cubic Hermite form functions}
	\label{fig::localShape}
\end{figure}  
The local functions \eqref{eq::hermite} are pieced together to global $C^1$ shape functions $N^i(\pos)$ by the standard finite element procedure of relating local degrees of freedom and global degrees of freedom. The degrees of freedom are the values and the first derivatives at the vertices of the background grid. Thus, at each vertex we have $2^3=8$ degrees of freedom for the discretization of a scalar field and $24$ degrees of freedom for the vector-valued displacement field. Thus, on a background cell we have $64$ local form functions and $192$ local degrees of freedom for the displacement field. We denote the vector-valued finite element space on the background grid by $V_h$. Then, the discrete problem is given by: Find $\displacement_h \in V_h$ such that 
\begin{subequations} \label{eq::discreteProblem}
\begin{equation}\label{eq::discreteProblemA}
M_h(\mathbf u_h,\mathbf w_h) = f_D(\mathbf w_h),
\end{equation}
holds for all $\mathbf w_h\in V_h$  and 
\begin{equation}\label{eq::discreteProblemB}
K_h(\mathbf u_h,\mathbf v_h) = f_N(\mathbf v_h), 
\end{equation}
\end{subequations}
holds for all $\mathbf v_h$ with $M_h(\mathbf v_h, \mathbf w_h)=0$ for all $\mathbf w_h \in V_h$.  
The linear and bilinear forms  are given by
\begin{subequations}\label{eq::forms}
\begin{align} \label{eq::formsA}
M_h(\mathbf u_h,\mathbf v_h) &= \int_{\Gamma_{D_i}} \mathbf v_h  \cdot \mathbf u_h \ddsx + \int_{\Gamma_{D_c}} [\nabla_\Surface (\RealNormalvector\cdot\mathbf v_h) \cdot \pmb \mu] [\nabla_\Surface(\RealNormalvector\cdot\mathbf u_h) \cdot \pmb \mu] \ddsx, \\
f_D(\mathbf v_h) &= \int_{\Gamma_{D_i}} \mathbf v_h  \cdot \mathbf u_{D_i} \ddsx + \int_{\Gamma_{D_c}} [\nabla_\Surface (\RealNormalvector\cdot\mathbf v_h) \cdot \pmb \mu]  \;  u_{D_c} \ddsx, \label{eq::formsB} \\
K_h(\mathbf u_h,\mathbf v_h) &= t\int_\Surface   \pmb\gamma(\mathbf v_h):\mathcal E : \pmb\gamma(\mathbf u_h) \ddx + \frac{t^3}{12} \int_\Surface  \pmb\rho(\mathbf v_h):\mathcal E : \pmb\rho(\mathbf u_h) \ddx, \label{eq::formsC}   \\
f_N(\mathbf v_h) &=\int_\Surface \mathbf v_h \cdot \mathbf b \ddx + \int_{\Gamma_{N_i}} v_{h,i} \, N^N_i \dd s_x \nonumber \\ &\qquad- \int_{\Gamma_{N_t}} \nabla_\Surface (\RealNormalvector\cdot\mathbf v) \cdot \mathbf t \; M^N_t \dd s_x - \int_{\Gamma_{N_\mu}} \nabla_\Surface (\RealNormalvector\cdot\mathbf v) \cdot \pmb \mu \; M^N_\mu \dd s_x. \label{eq::formsD}
\end{align}
\end{subequations}
In \eqref{eq::discreteProblemA} the solution $\displacement_h$ gets fixed to prescribed values at the Dirichlet boundary. However, the matrix $M_h$ (and also $K_h$) is singular by definition because of two reason. First, in order to avoid further notation, we take $\mathbf w_h\in V_h$ resulting in zero rows and columns, which can be easily identified. Secondly, since we define the shape functions by restriction of the shape functions defined on the background mesh, they are not linear independent. In the standard setting of a fitted finite element method the respective degrees of freedom in \eqref{eq::discreteProblemA} are easy to identify and can be determined by interpolation. Here, in the case of an unfitted method a linear independent basis is not known explicitly in general and have to be determine. In \eqref{eq::discreteProblemB} the test functions $\mathbf v_h$ are restricted to the null-space of $M_h$. 

In the discrete method the integrals in \eqref{eq::forms} are evaluated by quadrature, which is described in the next section.
\subsection{Integral evaluation}
In order to evaluate the surface and line integrals in \eqref{eq::forms} we use the quadrature schema developed in \cite{saye2015high}. Here, we outline only the main ingredients and refer to \cite{saye2015high} for technical details. Following the standard finite element procedure the integrals are evaluated by summing up background cell (face) contributions where the shape functions are smooth. The individual contributions are evaluated by Gaussian quadrature. The main idea from \cite{saye2015high} is to subdivide the background cells (faces) until it is possible to convert the implicitly defined geometry into
the graph of an implicitly defined height function. Then, a recursive algorithm which requires only one-dimensional root finding and one-dimensional Gaussian quadrature can be set up. In order to chose suitable height function directions we have to be able to ensure the monotonicity of the level-set function in that direction. This can be done by showing that the derivative in that direction is uniform in sign, \ie place bounds on the values attainable by the derivative. In contrast to \cite{saye2015high} we use interval arithmetic for this task. 

\subsection{Solution strategies}\label{sec::solutionMethods}
In order to solve \eqref{eq::discreteProblem} we consider the three methods,
\begin{itemize}
	\item Null-space method, 
	\item Penalty method, and
	\item Lagrange multiplier method.
\end{itemize}
In the null-space method we first solve 
\begin{equation}\label{eq::nullA}
M_h u^D_h = f_D,
\end{equation}
and compute the null-space of $M_h$. We denote the null-space basis by $Z_h$. In a next step the solution $u^0_h$ of 
\begin{equation}\label{eq::nullB}
(Z_h^\top K_h Z_h) u^0_h = Z_h^\top (f_N - K_h u^D_h)
\end{equation}
is computed. The overall solution is then given by
\begin{equation}
u_h = Z_h u^0_h + u^D_h.
\end{equation}
In the penalty method we solve a system of linear equations of the form
\begin{equation}\label{eq::penalty}
(K_h + \alpha M_h) \, u_h = f_N + \alpha f_D,
\end{equation}
with the penalty parameter $\alpha > 0$. In the Lagrange multiplier method we solve the system of linear equation
\begin{equation}\label{eq::lagrange}
\begin{bmatrix}
K_h & M_h \\
M_h & 0 
\end{bmatrix} \begin{bmatrix}
u_h \\ \lambda_h
\end{bmatrix} = \begin{bmatrix} f_N  \\  f_D \end{bmatrix},
\end{equation}
with the Lagrange multipliers $\lambda_h$.

\subsection{Implementation}
The proposed method has been implemented in Matlab. Within the method the exact level-set function $\phi(\pos)$ is used. For the evaluation of the surface normal vector \eqref{eq::real_normal_vector} and the Weingarten map \eqref{eq::real_weingarten}, the first and second order derivatives of the level-set function are necessary. In the implementation we use symbolic differentiation of $\phi(\pos)$ to provide these derivatives. 

In the present work, we have not used any stabilization term which is added to the weak form. Therefore, in each system of equations \labelcref{eq::penalty,eq::lagrange,eq::nullA,eq::nullB} the system matrix is singular by definition. One strategy to would be to add stabilization terms to the bilinear forms \eqref{eq::discreteProblemB} and \eqref{eq::discreteProblemB}. We refer to \cite{Olshanskii_Reusken_2017} for an overview of different possibilities.  Although such a stabilization can be designed in such a way that the convergence order of the method is not been altered, a stabilization decreases the accuracy of the method. Therefore, a strategy is investigated, where no stabilization is necessary. However, we have observed in numerical experiments that the Matlab  backslash operator does not give satisfactory results. Due to this reason, we use the direct solver suitable for under-determined linear equation systems from the  SuiteSparse\footnote{http://faculty.cse.tamu.edu/davis/suitesparse.html} project.

\section{Numerical results}
In this section, numerical results are presented. First, we verify the implementation of the proposed method against  exact manufactured solutions. Secondly, we demonstrate that the method works by testing it with two benchmark problems (one cylindrical shell and one spherical shell) of the well-known shell obstacle course \cite{belytschko1985}. Finally, in a fourth example, a complex shell geometry is investigated.
\subsection{Verification example}
The implementation of the proposed method is verified. We have successfully run the method on various geometries, displacement fields and boundary condition combinations, but present only the results of two configurations. In all verification examples we used $B=[-0.4,0.8]\times[0,1]\times[-0.3,1]$ and a shell thickness $t=0.25$. The material parameters are $E=10$ and $\nu=0.4$. The considered geometries are defined by the zero level-set of the functions
\begin{subequations} 
	\begin{align}
		\phi_1(x,y,z) &= x + z - 0.7,  \label{eq::verify_geoA} \\
		\phi_2(x,y,z) &= 4x^2 + 0.25y^2 + z^2 - 0.7, \label{eq::verify_geoB}
	\end{align}
\end{subequations}
\begin{figure}[h t]
	\hfill
	\subfloat[$\phi_1(x,y,z) = x + z - 0.7$]
	{\includegraphics[width=0.4\textwidth]{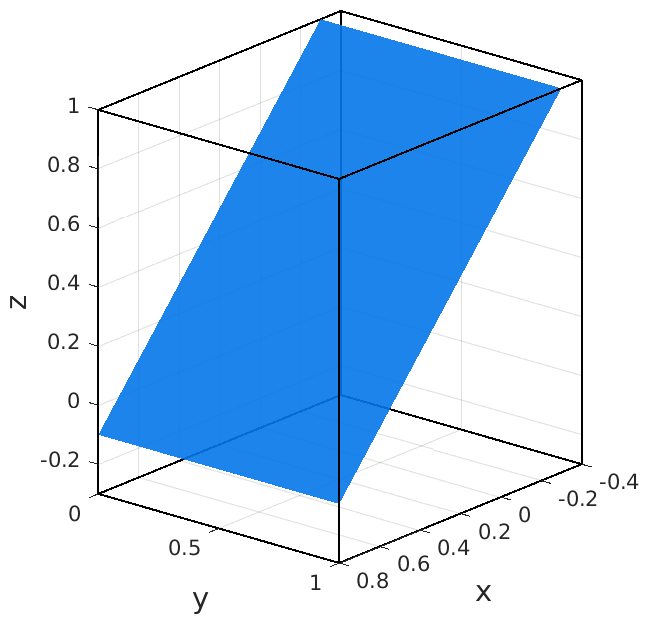}}\hfill
	\subfloat[$\phi_2(x,y,z) =4x^2 + 0.25y^2 + z^2 - 0.7$ ]
	{\includegraphics[width=0.4\textwidth]{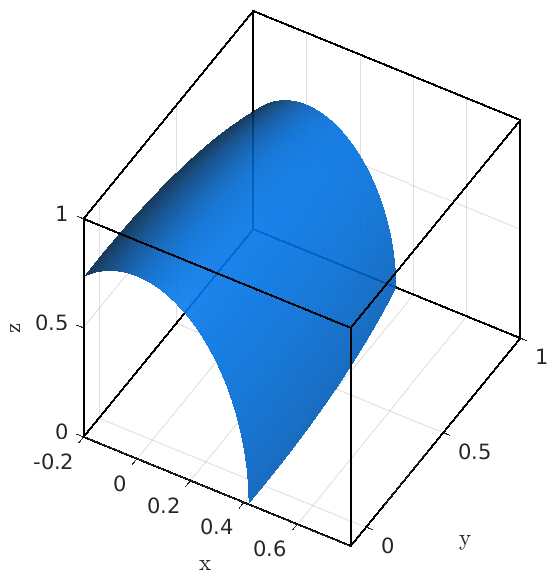}}
	\hfill
	\caption{Problem geometries of the verification}
	\label{fig::verify_geometry}	
\end{figure}
and are illustrated in \Cref{fig::verify_geometry}. The level-set function \eqref{eq::verify_geoA} implies a flat geometry, whereas \eqref{eq::verify_geoB} implies a surface with varying curvature. 

As manufactured solutions we consider the two displacement fields
\begin{subequations}\label{eq::verifySolution}
  	\begin{align}
  	\displacement^{ex}_1(x,y,z) &= x^3\mathbf e_x +y \,x^3\mathbf e_y +(xzy^2+x(x-1)y(y-1))\mathbf e_z, \\
  	\displacement^{ex}_2(x,y,z) &= \sin(16y)\cos(16x\,z)\mathbf e_x+\cos(16x\,y\,z)\mathbf e_y+2\sin(16x\,y\,z)\mathbf e_z.
  	\end{align}
\end{subequations}
Using \eqref{eq::local_force_equilibrium}, we compute symbolically the necessary surface force $\bodyforce$ such that \eqref{eq::verifySolution} is the respective exact solution of the shell problem.
The displacement field $\displacement_1$ is chosen as a third order polynomial such that the solution can be represented exactly in the discrete space, whereas $\displacement_2$ can only be approximated. 

In the following, we study the behavior of the error 
 \begin{equation}
 e = \sqrt{\frac{\int_\Surface (\mathbf u_h - \mathbf u^{ex})^2 \ddx }{\int_\Surface (\mathbf u^{ex})^2 \ddx}}
 \end{equation}
 under uniform mesh refinement for the three different solution methods given in \Cref{sec::solutionMethods}. Furthermore, for comparison, we also consider the Hermite interpolation $\mathbf u^{int}_h$ of the solution on the background grid and the surface $L_2$-projection: Find $\mathbf u^{L_2}_h \in V_h$ such that 
 \begin{equation}
 \int_\Surface (\displacement^{ex}-\mathbf u^{L_2}_h)^2 \ddx \quad \rightarrow \quad \text{min}.
 \end{equation}
 We remark that the Hermite interpolation and the $L_2$-projection are only possible if the solution is known, and is thus only computed for the verification examples in this section. Furthermore, all solutions apart from the Hermite interpolation require the solution of a system of linear equations. 
 
 The numerical results for the flat plate defined in \eqref{eq::verify_geoA} are visualized in \Cref{fig::verify_penalty1,fig::verify_conv1}. The refinement level 0 refers to a single background cell. In \Cref{fig::verify_penalty1} the errors obtained by the penalty method are given for different refinement levels and penalty factors. 
 \begin{figure}[h t]
 	\subfloat[displacement $\mathbf u_1$]{
 		\includegraphics[width=0.49\textwidth]{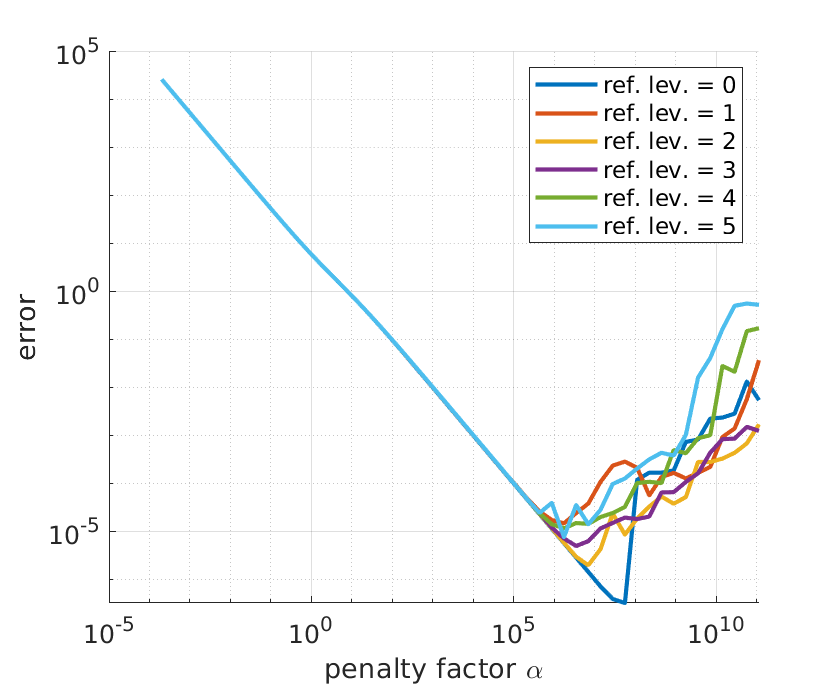}}
 	\subfloat[displacement $\mathbf u_2$]{
 		\includegraphics[width=0.49\textwidth]{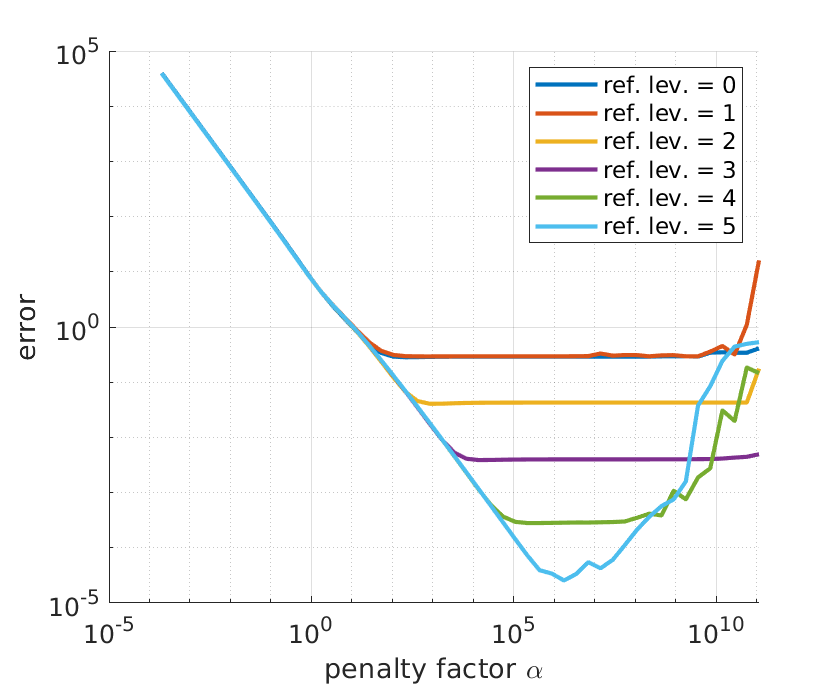}}
 	\caption{Errors for different penalty factors on the geometry induced by $\phi_1$}
 	\label{fig::verify_penalty1}	
 \end{figure}
We remark that for this flat geometry the numerical integration is exact up to round of errors. Therefore, for $\displacement_1$ the sources for errors are the error due to the imposition of the boundary conditions by the penalty method (depending on the penalty factor) and round off errors in the numerical computations. As expected the errors decrease with increasing penalty parameter up to a point where the ill-conditioning of the linear system dominates the error. Since $\displacement_2$ can not be represented exactly in the discrete space an approximation error limits the overall error. 
 \begin{figure}
 	\includegraphics[width=0.99\textwidth]
 	{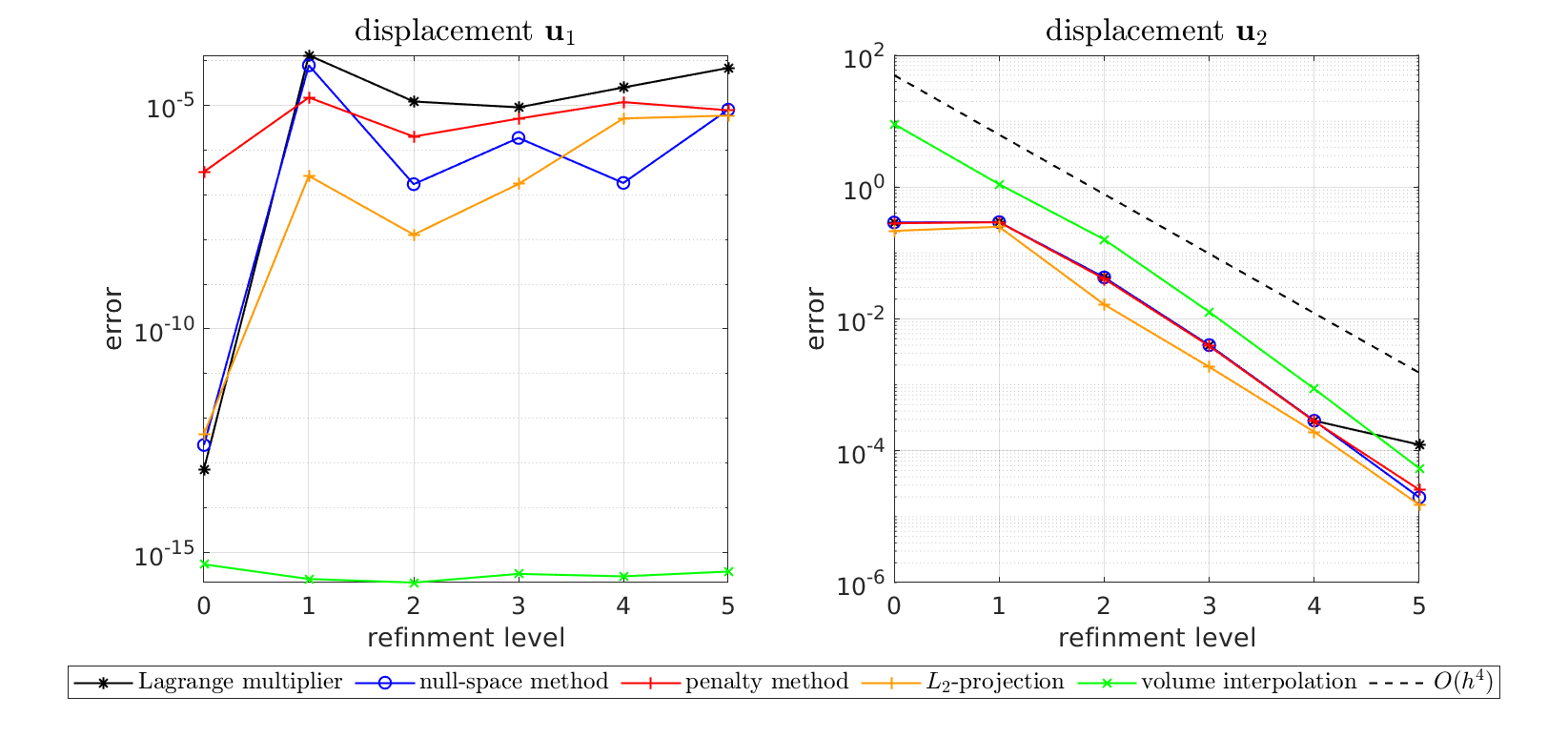}
 	\caption{Results for $\phi_1$ and $\displacement_1$ (left) and $\displacement_2$ (right)}
 	\label{fig::verify_conv1}
 \end{figure}
In \Cref{fig::verify_conv1} the results for the different solution methods are given. For the penalty method we used the lowest errors of the results shown in \Cref{fig::verify_penalty1}. As no system of equation has to be solved for the interpolation on the background grid (\textit{volume interpolation}) the error is around $10^{-16}$ for $\displacement_1$ for all refinement levels. In contrast to this the other results require the solution of a system of equations and therefore the errors are between $10^{-8}$ and $10^{-4}$ due to round off errors. For $\displacement_2$ we observe the convergence of all methods with optimal rate. Here, by definition the $L_2$-projection gives the lowest error, whereas the Hermite interpolation results in the highest error for a fixed refinement level (apart from level 5, where the error due to ill-conditioning of the system of equations dominates).
     
\begin{figure}[h t]
	\subfloat[displacement $\mathbf u_1$]{
		\includegraphics[width=0.49\textwidth]{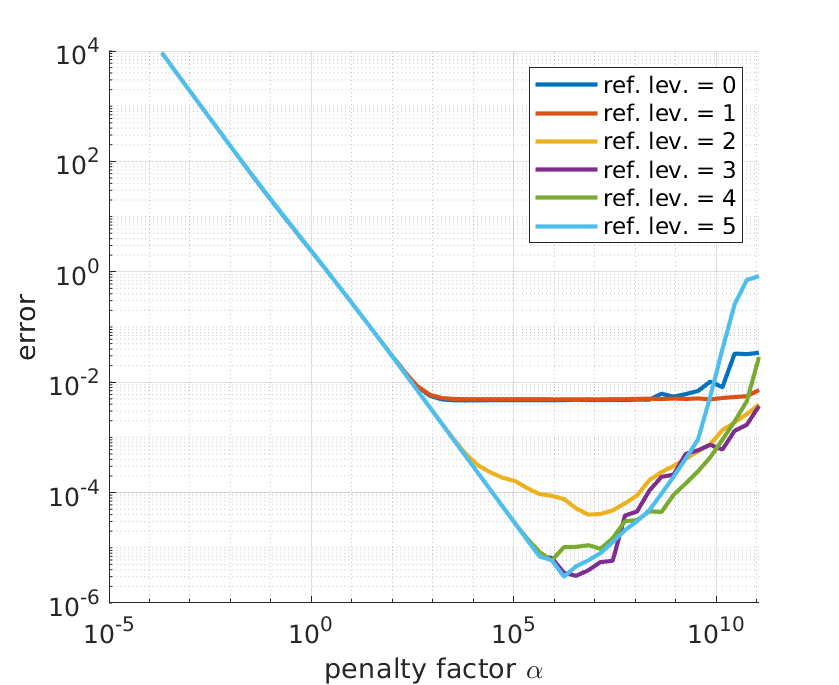}}
	\subfloat[displacement $\mathbf u_2$]{
		\includegraphics[width=0.49\textwidth]{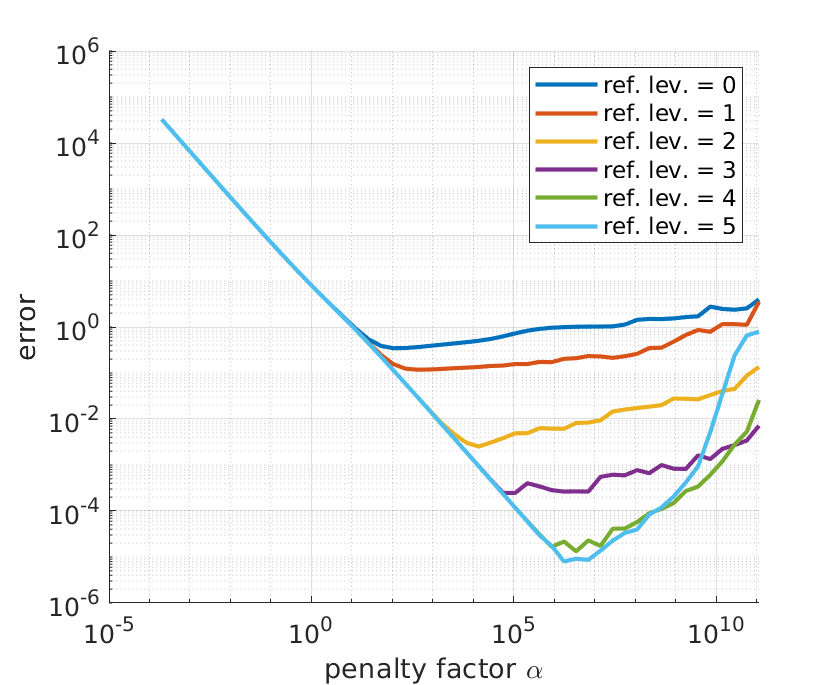}}
	\caption{Errors for different penalty factors on the geometry induced by $\phi_2$}
	\label{fig::verify_penalty2}	
\end{figure}
\begin{figure}
	\includegraphics[width=0.99\textwidth]
	{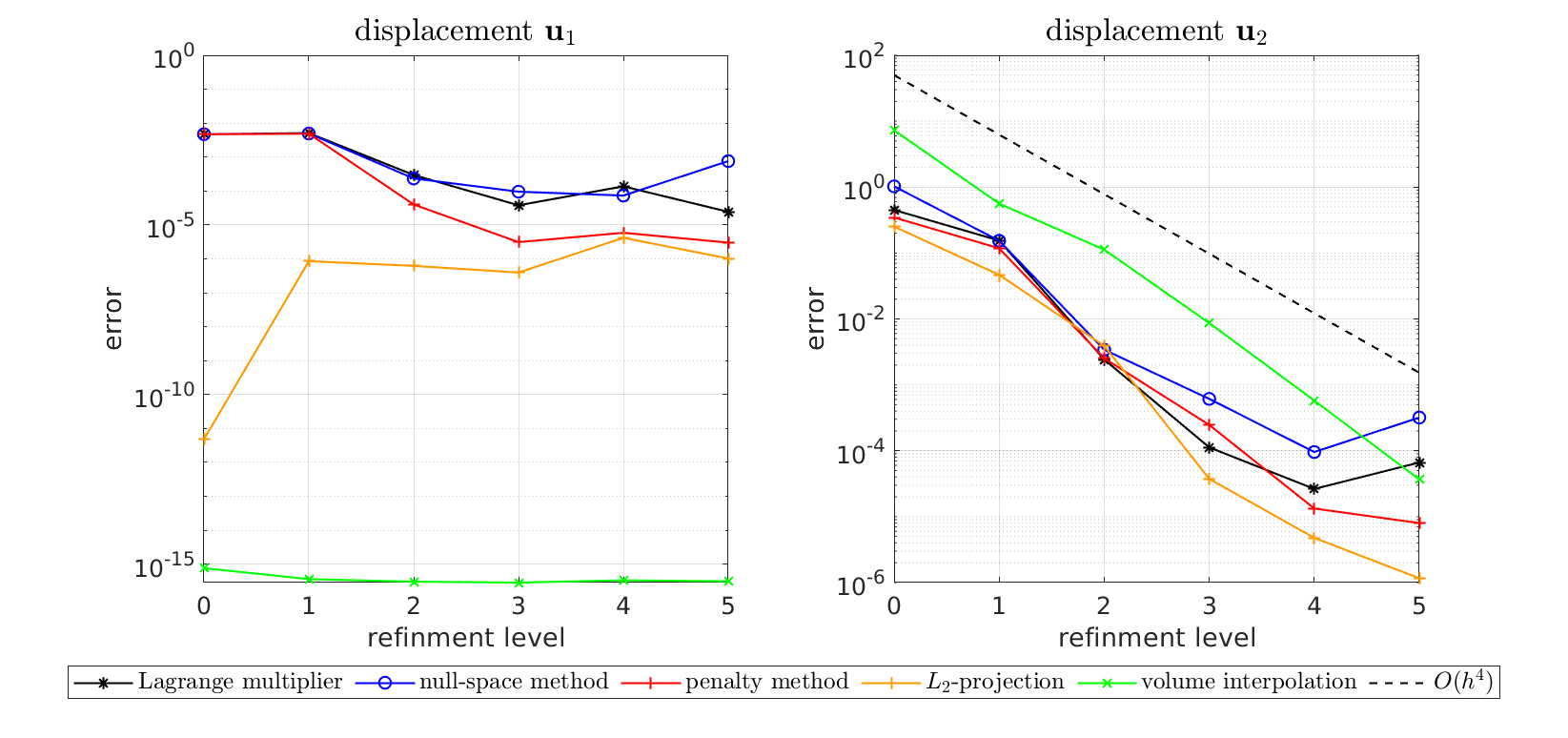}
	\caption{Results for $\phi_2$ and $\displacement_1$ (left) and $\displacement_2$ (right)}
	\label{fig::verify_conv2}
\end{figure}
The numerical results for the part of the ellipse defined in \eqref{eq::verify_geoB} are visualized in \Cref{fig::verify_penalty2,fig::verify_conv2}.  In \Cref{fig::verify_penalty2} the results of the penalty method for different penalty parameters are given. In contrast to the flat geometry, now the numerical integration is not exact, yielding an additional error, which dominates for the three coarsest levels. 

In \Cref{fig::verify_conv2} the errors for the three different solution methods, the Hermite interpolation and the $L_2$-projection are visualized.
Again, for the penalty method we used the lowest errors of the results shown in \Cref{fig::verify_penalty2}. Similar as before, for the displacement field $\displacement_1$ the volume interpolation gives errors around $10^{-16}$, whereas for the other solutions the errors are between $10^{-7}$ and $10^{-2}$ due to round off errors. For displacement field $\displacement_2$ we observe the convergence of all methods. However, due to round off errors the accuracy is limited. Nevertheless, we remark that an relative error level of about $10^{-5}$ is more than sufficient for practical problems. This can also be seen from the visualizations of the solutions obtained with the null-space method for $\displacement_1$ in \Cref{fig::verify_deformation1} and for $\displacement_2$ in \Cref{fig::verify_deformation2}. For the fine levels no difference in the solutions can be seen by eye.
\begin{figure}[h t]
	\subfloat[level $0$]
	{\includegraphics[width=0.33\textwidth]{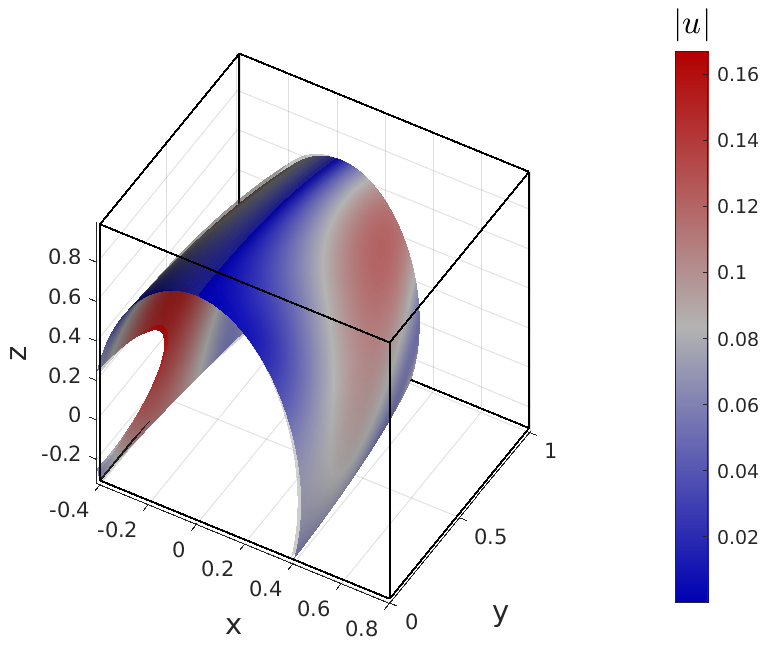}}
	\subfloat[level $1$]
	{\includegraphics[width=0.33\textwidth]{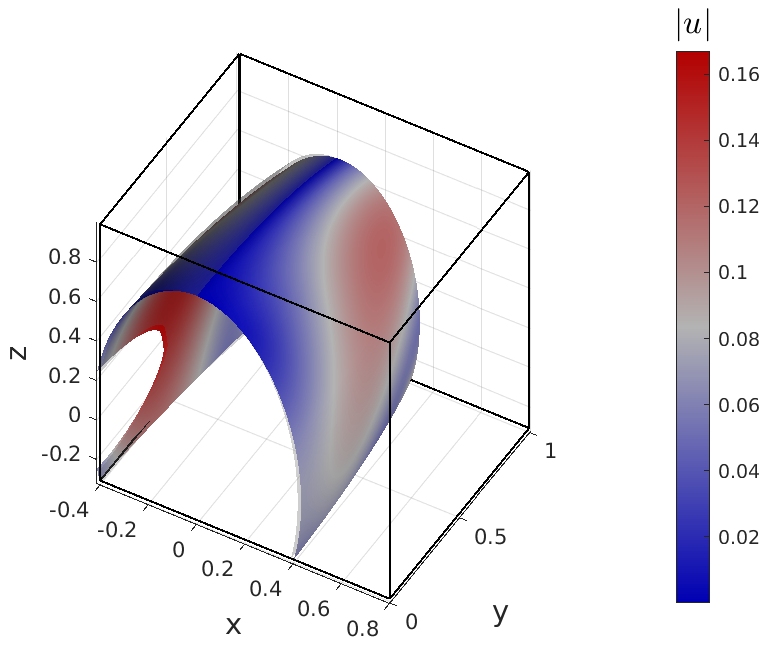}}
	\subfloat[level $2$ ]
	{\includegraphics[width=0.33\textwidth]{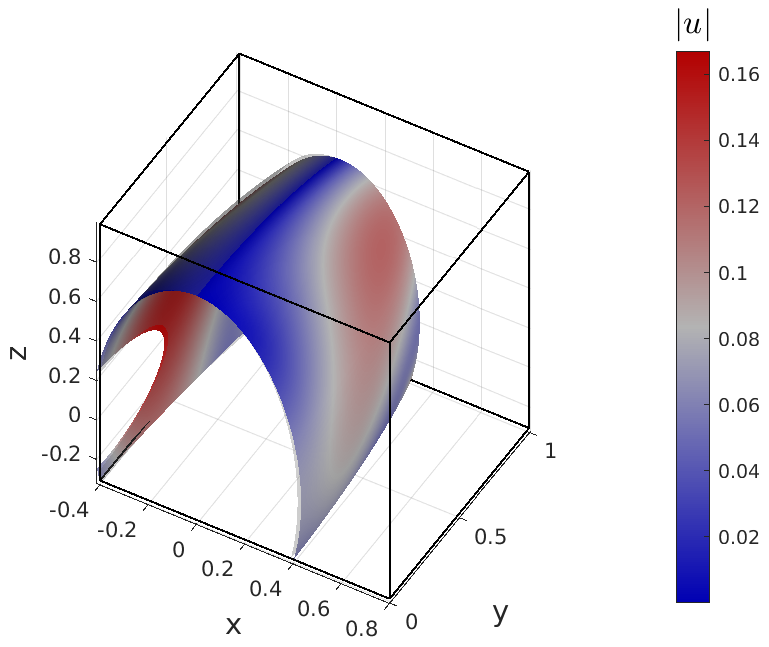}}
	
	\subfloat[level $3$]
	{\includegraphics[width=0.33\textwidth]{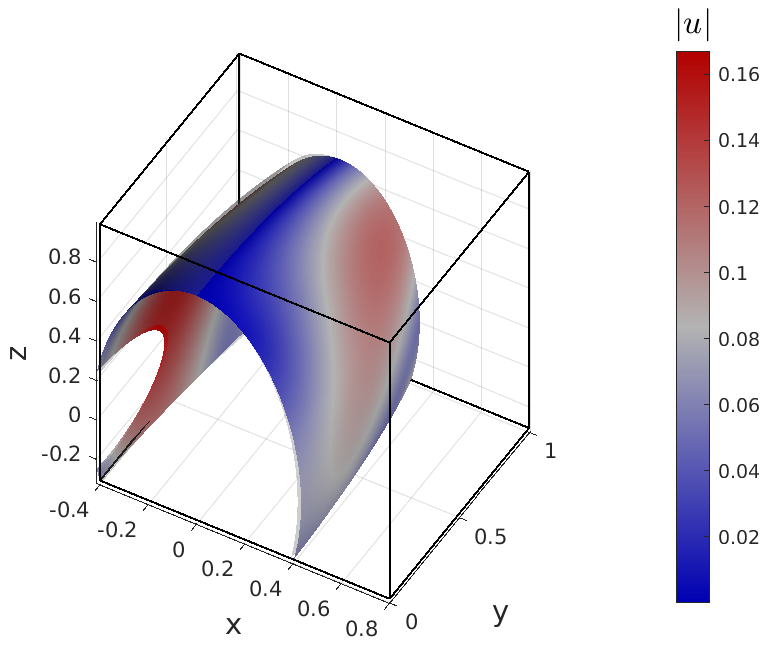}}
	\subfloat[level $4$]
	{\includegraphics[width=0.33\textwidth]{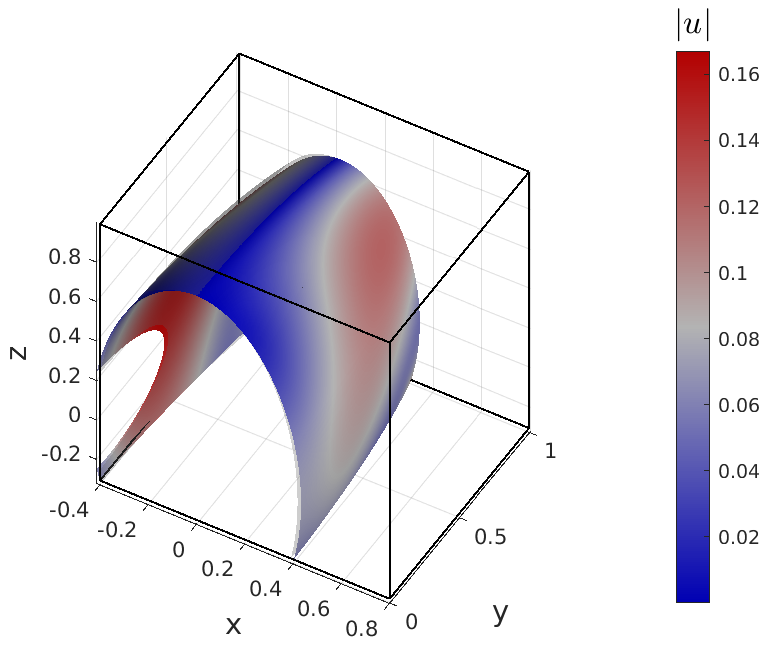}}
	\subfloat[level $5$ ]
	{\includegraphics[width=0.33\textwidth]{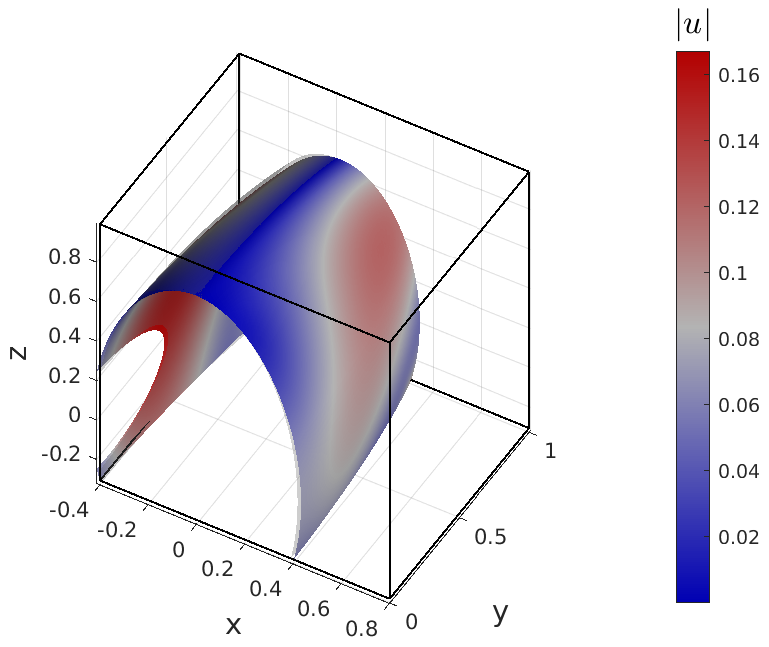}}
	\caption{Visualization of the displacement results for $\phi_2$ and $\mathbf u_1$}
	\label{fig::verify_deformation1}	
\end{figure}
\begin{figure}[h t]
	\subfloat[level $0$]
	{\includegraphics[width=0.33\textwidth]{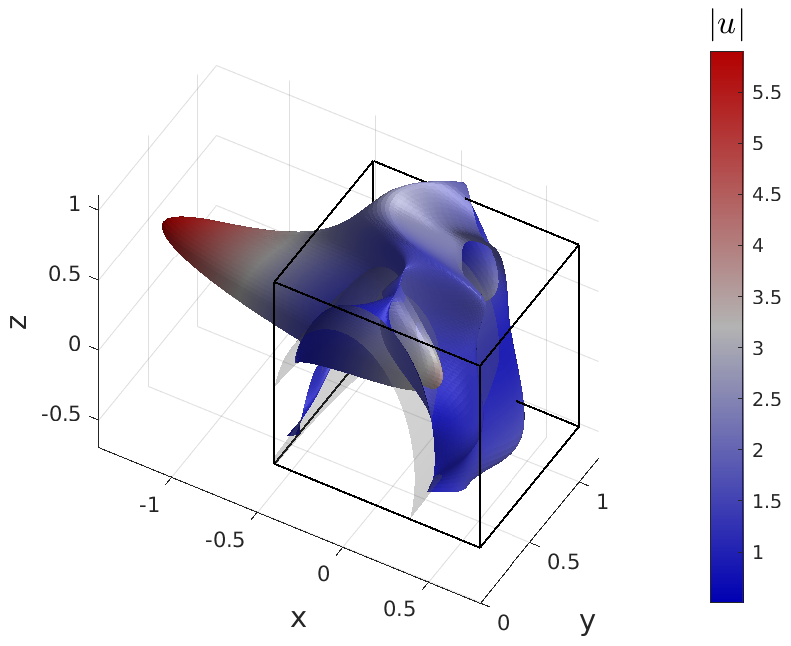}}
	\subfloat[level $1$]
	{\includegraphics[width=0.33\textwidth]{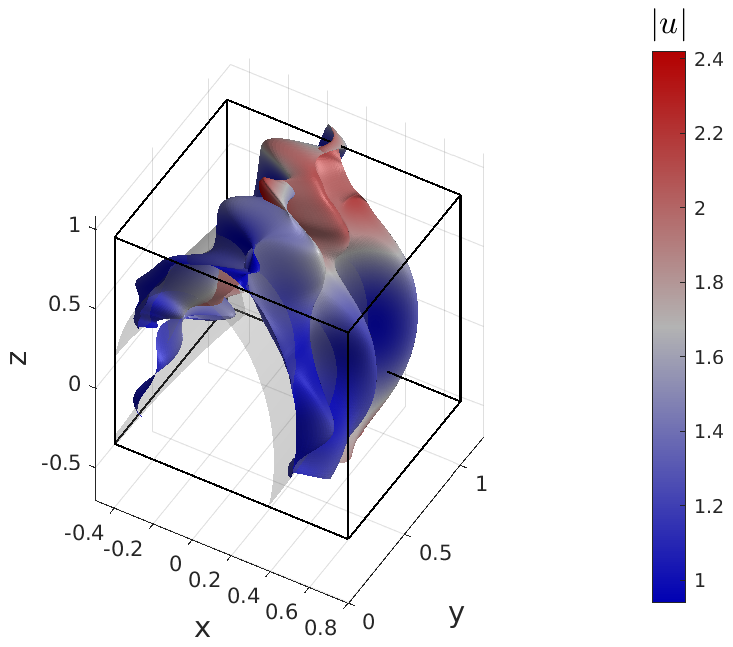}}
	\subfloat[level $2$ ]
	{\includegraphics[width=0.33\textwidth]{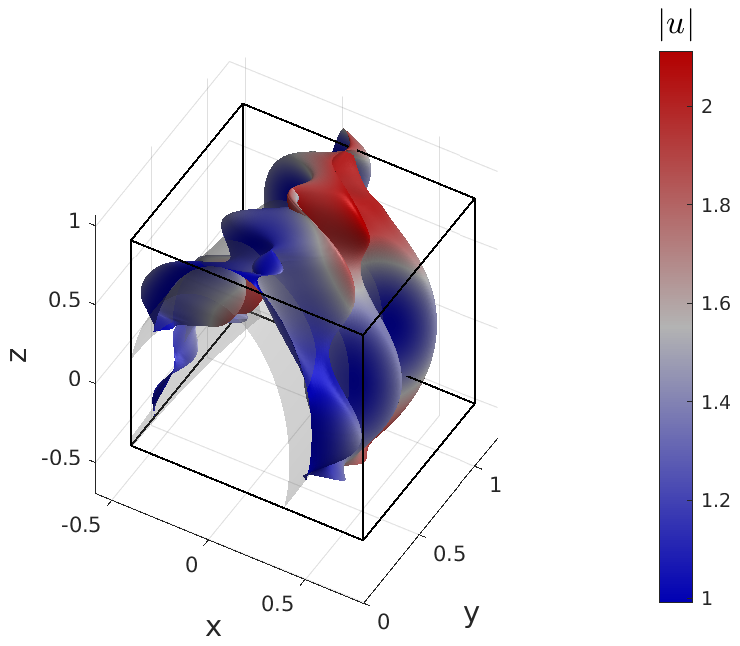}}
	
	\subfloat[level $3$]
	{\includegraphics[width=0.33\textwidth]{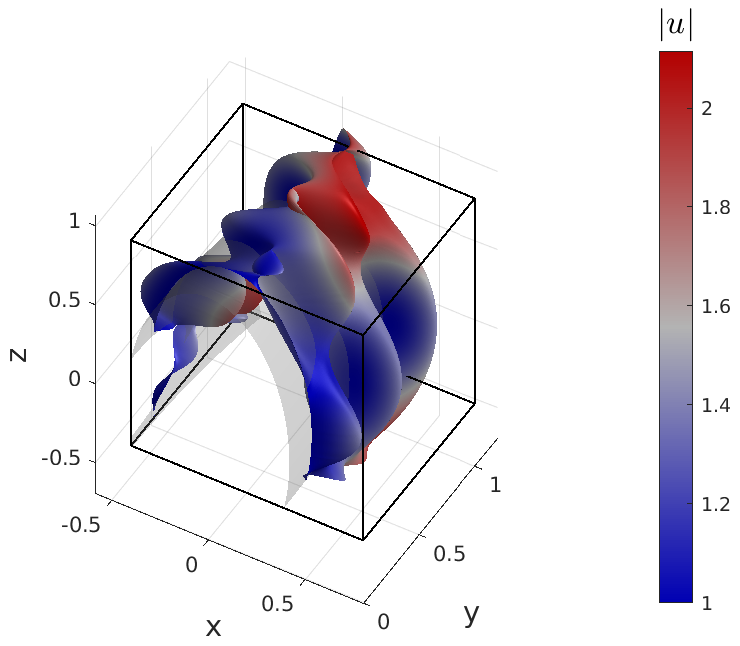}}
	\subfloat[level $4$]
	{\includegraphics[width=0.33\textwidth]{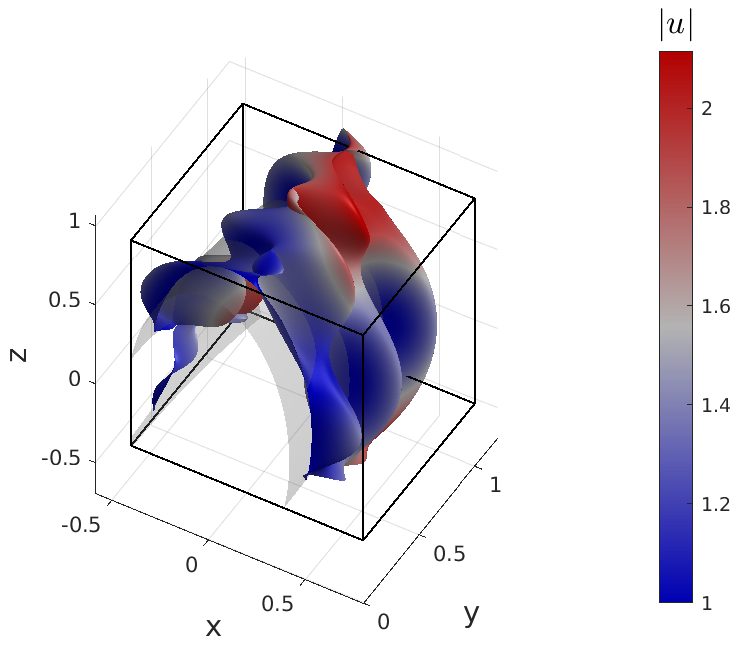}}
	\subfloat[level $5$ ]
	{\includegraphics[width=0.33\textwidth]{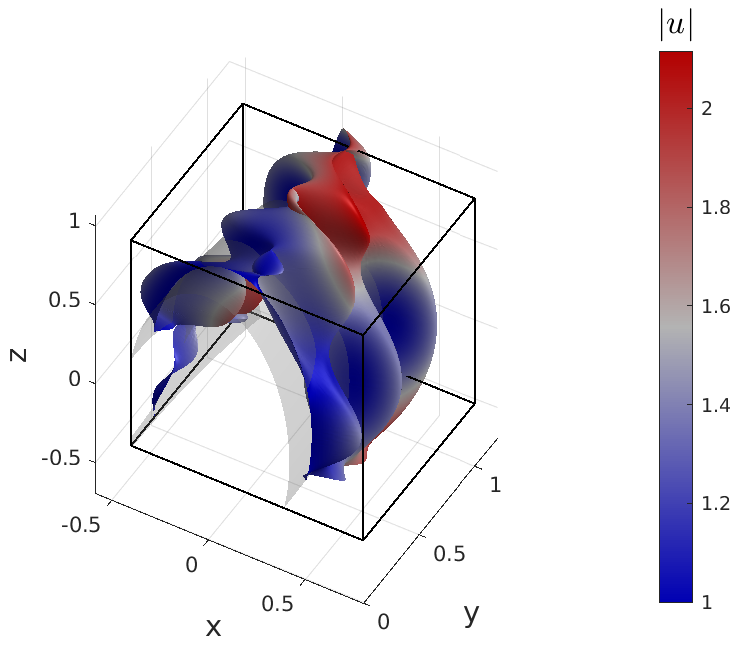}}
	\caption{Visualization of the displacement results for $\phi_2$ and $\mathbf u_2$}
	\label{fig::verify_deformation2}	
\end{figure}   

\subsection{Scordelis-Lo roof}
\begin{figure*}[ h t]
	\begin{minipage}{0.75\textwidth}
		\includegraphics[width=\textwidth]{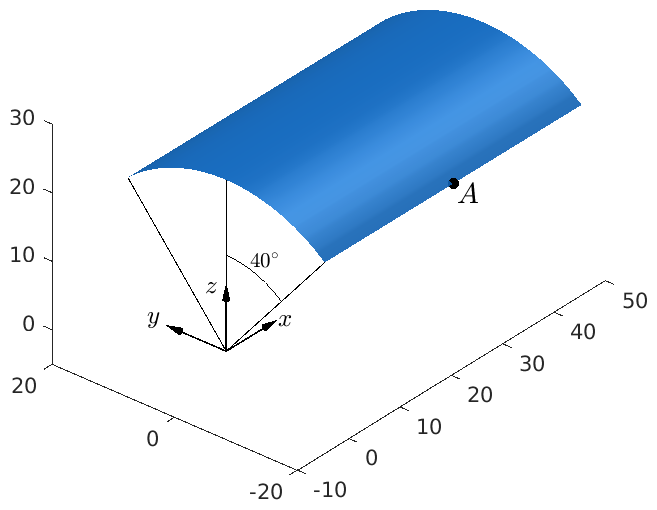}
	\end{minipage}
	\begin{minipage}[h t]{0.19\textwidth}
		\vspace{-3cm}
		\begin{align*}
		E  &=\Nmm{4.32 \cdot 10^8} \\
		\nu&=0  \\
		R &= 25 \\
		L &= 50 \\
		t  &=\um{0.25}
		\end{align*}
	\end{minipage}
	\caption{Problem description of the Scordelis-Lo roof problem}
	\label{fig::ex_para_FEM_roof}
\end{figure*}
We consider the classical Scordelis-Lo roof problem, which is one example from the shell obstacle course \cite{belytschko1985}. It is a popular benchmark test to assess the performance of finite elements regarding complex membrane strain states. The cylindrical roof (radius $r=\um{25}$) is supported by rigid diaphragms at the ends ($x=\um{0}$ and $x=\um{50}$), \ie $u_y=u_z=\um{0}$. The straight edges are free. The geometry and the material parameters are depicted in \Cref{fig::ex_para_FEM_roof}. The structure is subjected to gravity loading with $\mathbf b=-\Nm{90}\,\mathbf e_z $. 
We describe the problem geometry by 
\begin{equation}
	\phi(x,y,z)	= y^2 + z^2 - r^2,
\end{equation}
and $B = [0,50] \times[-r\sin(\frac{40}{180}\pi),r \sin(\frac{40}{180}\pi)]\times[10,31.25]$.	
	
We study the vertical displacement of point $A$, which is located in the middle of one free edge. As a reference solution we use the overkill solution $u_z^A = -0.3006$ from \cite{Bieber_Oesterle_Ramm_Bischoff_2018} obtained by an isogeometric formulation using fifth-order NURBS and a mesh of 48 control points in each direction. The results for different meshes obtained with the presented methods are given in \Cref{tab::roof_dispA}and the deformed geometry is depicteed in \Cref{fig::ex_roof_solution}. We observe that the null-space method and the penalty method are able to reproduce the reference displacement found in literature accurately. However, the results obtained by the Lagrange multiplier method show some instability of the method. The investigation of the origin of these instabilities is topic of further research. 
\begin{figure}[h t]
	\centering
	\includegraphics[width=0.69\textwidth]{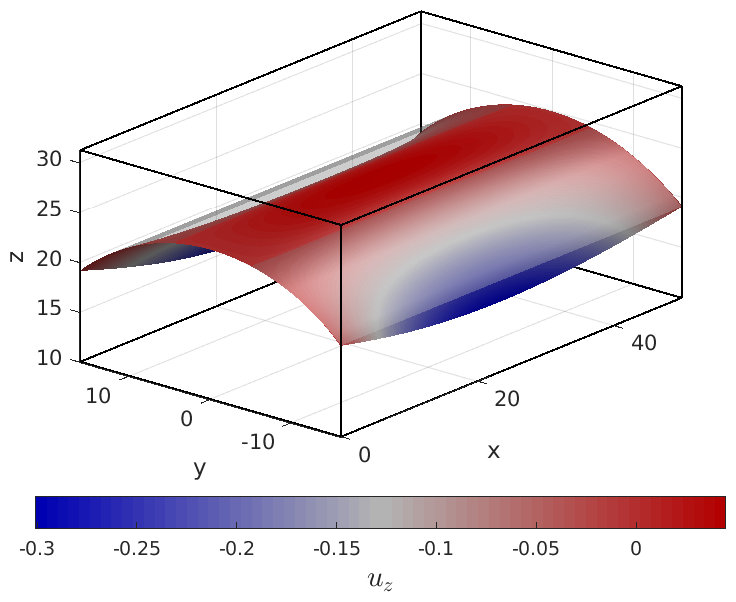}
	\caption{Deformed configuration of the Scordelis-Lo roof (displacements scaled by a factor of 10)}
	\label{fig::ex_roof_solution}
\end{figure}
\begin{table}
	\centering
	\caption{Vertical displacements of the Scordelis-Lo roof at point $A$. Refernece: $u_z^A = -0.3006$.}
	\label{tab::roof_dispA}
	\begin{tabular}{c|ccc}
ref. level & null-space & Lagrange mulitplier & penalty method \\ 
\hline 
0 & -0.43746 & -0.47767 & -0.29660 \\ 
1 & -0.30214 & -0.30239 & -0.30214 \\ 
2  & -0.30065 & -0.16734 & -0.30065 \\ 
3 & -0.30060 & -0.30059 & -0.30060 \\ 
4 & -0.30059 & -0.30063 & -0.30059 \\ 
5  & -0.30059 & -0.31847 & -0.30060 \\ 
\hline 
\end{tabular}
\end{table} 
\subsection{Pinched hemisphere}
In this example, we consider the pinched hemisphere problem from the shell obstacle course in \cite{belytschko1985}. We describe the spherical mid-surface by
\begin{equation}
\begin{aligned}
\phi(x,y,z) = x^2 + y^2 + z^2 - R^2,
\end{aligned}
\end{equation}    
with $R=\um{10}$ and $B = [-12.5,12.5] \times [-12.5,12.5]\times[0,12.5]$. The material properties and the general problem setup are shown in \Cref{fig::hemisphere}.
\begin{figure*}[h t]
	\begin{minipage}{0.75\textwidth}
		\includegraphics[width=\textwidth]{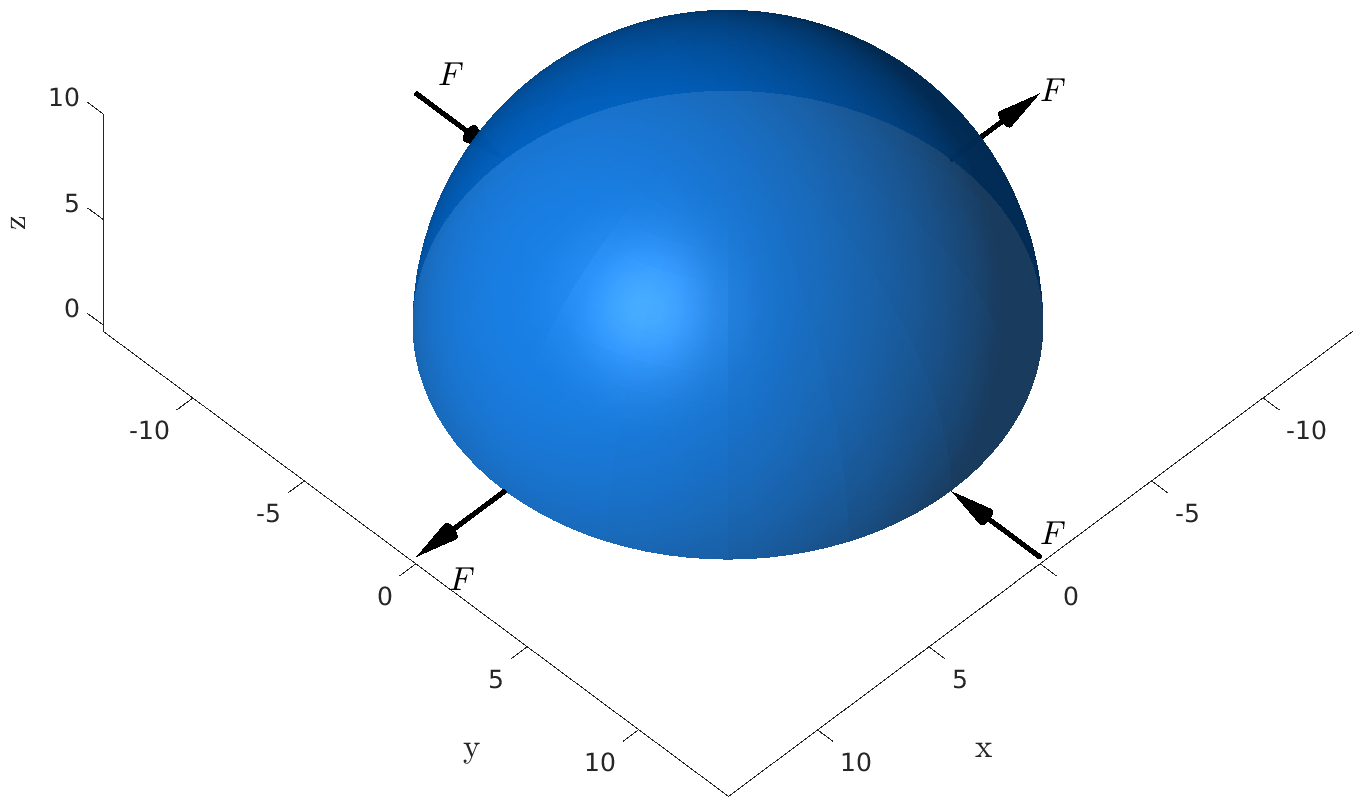}
	\end{minipage}
	\begin{minipage}[h]{0.19\textwidth}
		\vspace{-2cm}
		\begin{align*}
		E  &=\Nmm{6.825 \cdot 10^7} \\
		\nu&=0.3  \\
		R &= 10 \\
		t &=\um{0.04} \\
		F &= 2
		\end{align*}
	\end{minipage}
	\caption{Problem description of the pinched hemisphere problem}
	\label{fig::hemisphere}
\end{figure*}
The edge of the hemisphere is unconstrained and the four radial forces
have alternating signs such that the sum of the applied forces is
zero. We investigate the radial displacement at the loaded points. In
\cite{belytschko1985}, the reference displacement of
$u_r = 0.0924$ is given. The results obtained by the presented methods are given in \Cref{tab::henisphere_dispA} and the deformed configuration is depicted in \Cref{fig::ex_hemisphere}. We observe that here all three methods give nearly the same results. The values obtained for the finest levels are in very good agreement with the reference value found in literature. 
\begin{figure}[h]
	\centering
	\includegraphics[width=0.99\textwidth]{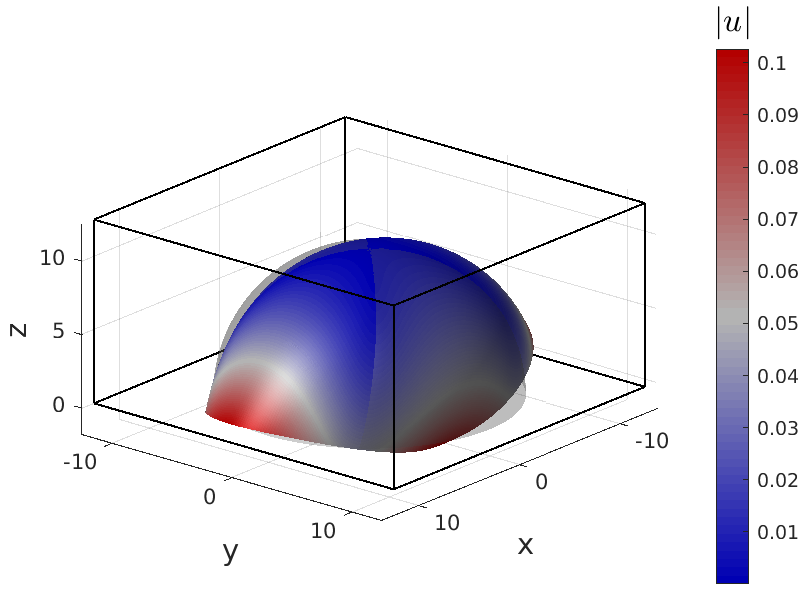}
	\caption{Deformed configuration of the pinched hemisphere (displacements scaled by a factor of 40)}
	\label{fig::ex_hemisphere}
\end{figure}
\begin{table}
	\centering
	\begin{tabular}{c|ccc}
ref. level & null-space & Lagrange mulitplier & penalty method \\ 
\hline 
0 & 0.04490 & 0.04490 & 0.04490 \\ 
1 & 0.08989 & 0.08989 & 0.08990 \\ 
2  & 0.09222 & 0.09222 & 0.09222 \\ 
3 & 0.09239 & 0.09239 & 0.09239 \\ 
4 & 0.09241 & 0.09241 & 0.09240 \\ 
5 & 0.09241 & 0.09241 & 0.09241 \\ 
\hline 
\end{tabular}
	\caption{Displacements of the pinched hemisphere at one loading point. Refernece: $u_r = 0.0924$.}
	\label{tab::henisphere_dispA}
\end{table}
\subsection{Gyroid}
In this example, we consider the deformation of a shell structure with a complex geometry. The mid-surface is part of a gyroid which is given by the level-set function
\begin{equation}
\phi(x,y,z) = \sin(\pi x)\cos(\pi y)+\sin(\pi y)\cos(\pi z)+\sin(\pi z)\cos(\pi x).
\end{equation}
\begin{figure}[ h t]
	\begin{minipage}{0.75\textwidth}
	\centering
	\includegraphics[width=0.7\textwidth]{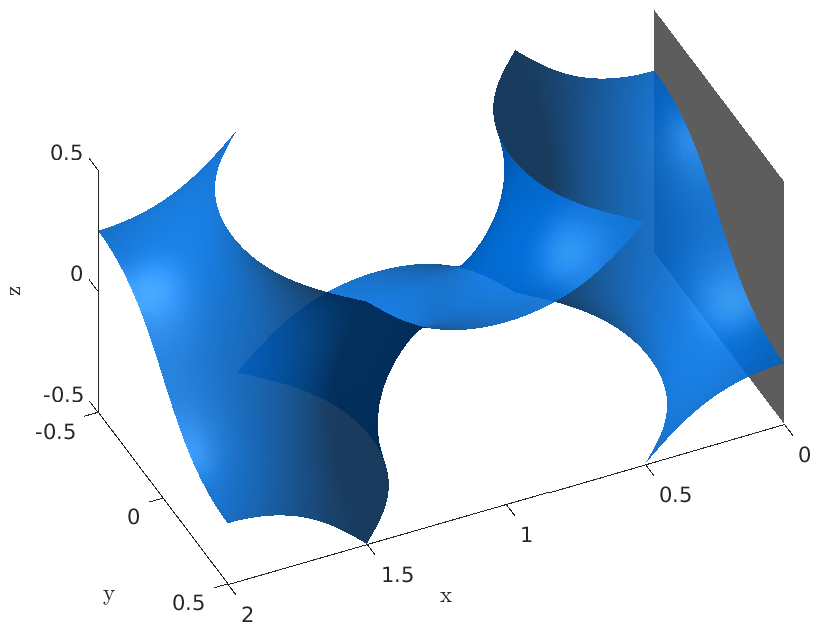}
	\end{minipage}
	\begin{minipage}[ht]{0.19\textwidth}
		\vspace{-2cm}
		\begin{align*}
		E  &=\Nmm{7 \cdot 10^{10}} \\
		\nu&=0.3  \\
		t &=\um{0.03} \\
		\end{align*}
	\end{minipage}
	\caption{Geometry of the gyroid problem. The structure is clamped at the gray plane.}
	\label{fig::ex_gyroid}
\end{figure} 
The considered shell lies in the cuboid $\cuboid = [\um{0},\um{2}]\times$ $[\um{-0.5},\um{0.5}]\times$ $[\um{-0.5},\um{0.5}]$. The geometry and the material parameters are depicted in \Cref{fig::ex_gyroid}. The shell structure is clamped at the boundary curve which is in the plane $x=\um{0}$. We assume a thickness $t= \um{0.03}$. We study the vertical deflection due to a volume load $\bodyforce=-10^7 \mathbf e_z \, \Nm{}$ at the point $[2,0.5,-0.25]$. The deformed geometry is depicted in \Cref{fig::ex_gyroid_solution}. The results of the proposed methods are summarized in \Cref{tab::gyroid_dispA}. We observe that the results obtained by the null-space method and the penalty method are nearly the same and that they are in good agreement with the reference displacement $u_z = -1.8812$ given in \cite{gfrerer2018b}. We remark that the reference solution was obtained for a seven-parameter shell model including more physical effects and thus leading to a slightly larger displacement. Therefore, the deviation in the deflection is acceptable. However, the  results obtained by the Lagrange multiplier method are incorrect. This issue should be further investigated in future work. 
\begin{figure}[h t]
	\centering
	\includegraphics[width=0.79\textwidth]{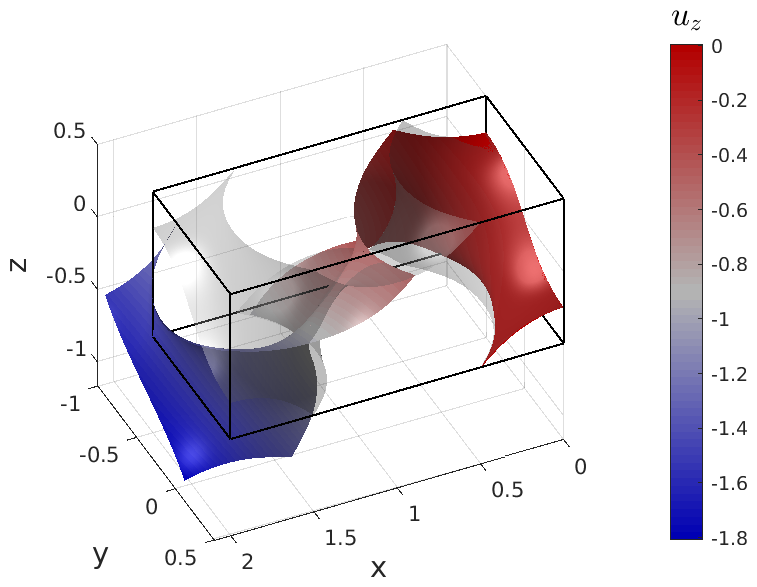}
	\caption{Deformed configuration of the gyroid}
	\label{fig::ex_gyroid_solution}
\end{figure}
\begin{table}
\centering
\caption{Displacements $u_z$ of the gyroid at the point [2,0.5,-0.25].}
\begin{tabular}{c|ccc}
ref. level & null-space & Lagrange mulitplier & penalty method \\ 
\hline 
0 & -0.24147 & -0.54289 & -0.31639 \\ 
1 & -1.70309 & -2.05171 & -1.71521 \\ 
2  & -1.80865 & -2.40048 & -1.80900 \\ 
3 & -1.80891 & -2.61238 & -1.80925 \\ 
4 & -1.80905 & 3.77214 & -1.80318 \\ 
\hline 
\end{tabular}
\label{tab::gyroid_dispA}
\end{table}

\section{Conclusions}
We have developed a $C^1$-continuous finite element method for thin shells with mid-surface given as the zero level-set of a scalar function. In order to achieve the continuity of the discretization, concepts of the TraceFEM and the Finite-Cell-Method are combined. In particular the shape functions on the shell surface are obtained by restriction of tensor-product cubic Hermite splines on a structured background mesh. In order to allow a natural implementation, the underlying shell model is formulated in a parametrization-free way. Furthermore, the strong form of the governing equations are given. This allows to obtain manufactured solutions on arbitrary geometries. Thus, the implementation of the proposed method is verified by a convergence analysis where the error is computed with an exact manufactured solution.

In the present method, the shape functions on the shell surface are linearly dependent. In order to avoid a singular system matrix, a stabilization term can be used. In the presented method such a stabilization is avoided. However, it is necessary to use the direct solver suitable for under-determined linear equation systems from the SuiteSparse project. We investigated three strategies to include the boundary conditions. These are the penalty method, the Lagrange multiplier method, and the null-space method. In the numerical experiments we have observed that the penalty method and the null-space method give reliable results. However, the Lagrange multiplier method suffers from instabilities in some examples, which should be further investigated. In future work, it would be also interesting to use iterative solvers in contrast to the used direct solver.

In contrast to thin shells, for Reissner-Mindlin shells only $C^0$-continuous shape functions are commonly used. In order to avoid transverse shear locking, in \cite{long2012shear,echter2013hierarchic} an hierarchic concept of shell models is presented. This approach has the advantage that transverse shear locking is eliminated on the continuous formulation level, independent of a particular discretization, but requires $C^1$-continuous shape functions. An extension of the present work to Mindlin-Reissner shells with implicitly defined mid-surface seems possible and would be worth to investigate. 

\begin{acknowledgements}
The author thanks Thomas-Peter Fries from the Institute of Structural Analysis at TU Graz, and Helmut Gfrerer from the Institute of Computational Mathematics at JKU Linz for valuable discussions on the topic of the paper.
\end{acknowledgements}

%
\section*{Conflict of interest}
The author declares that he has no conflict of interest.
\begin{appendix}
\section{Proofs}\label{sec::proofs}
\begin{proof}[Proof of \Cref{thm::projectorMetric}] 
	It is sufficient to show that the application of the operators $\ParaMetric$ and $\projector\circ\parametrization=\mathbf I - \ParaNormalvector\otimes \ParaNormalvector$ to a basis $(\baseVector_l,\ParaNormalvector)$ gives the same result,
	\begin{align*}
	\mathbf G \cdot \baseVector_l &=  \baseVector_l, \\
	(\mathbf P\circ\parametrization) \cdot \baseVector_l &= \baseVector_l, \\
	\mathbf G \cdot \RealNormalvector &= \mathbf 0, \\
	(\mathbf P \circ\parametrization) \cdot \RealNormalvector &= \mathbf 0.
	\end{align*}
	Furthermore, direct calculation shows
	\begin{align*}
	\RealWeingarten \circ \parametrization = -(\nabla\RealNormalvector\cdot \projector) \circ \parametrization  = -(\nabla\RealNormalvector\circ \parametrization ) \cdot \baseVector_\indA \otimes \baseVector^\indA = -\ParaNormalvector_{,\indA}\otimes \baseVector^\indA = \ParaWeingarten.
	\end{align*}	
\end{proof}
\begin{proof}[Proof of \Cref{thm::surfaceDivergence}] 
	First we establish the relation
	\begin{equation}
	\sqrt{\det\ParaMetric}_{,\indC} = \sqrt{\det\ParaMetric} \;\christbS{\indC\indA}{\indA},
	\end{equation}
	Following \cite[Theorem 4.4-4]{ciarlet2006}, we have
	\begin{equation}
	\begin{aligned}
	\sqrt{\det(\ParaMetric)} = \det(\baseVector_1,\baseVector_2,\ParaNormalvector)
	\end{aligned}
	\end{equation}
	and the sought relation follows by
	\begin{equation}
	\begin{aligned}
	\left(\sqrt{\det(\ParaMetric)}\right)_{,\indA} &= \det(\baseVector_{1,\indA},\baseVector_2,\ParaNormalvector) + \det(\baseVector_1,\baseVector_{2,\indA},\ParaNormalvector) + \det(\baseVector_1,\baseVector_2,\ParaNormalvector_{,\indA}) \\
	&= \det(\christbS{1\indA}{\indD}\baseVector_{\indD}+\curvature_{\indA 1}\ParaNormalvector,\baseVector_2,\ParaNormalvector) +  \det(\baseVector_1,\christbS{2\indA}{\indD}\baseVector_{\indD}+\curvature_{\indA 2}\ParaNormalvector,\ParaNormalvector)  \\ &\qquad+ \det(\baseVector_1,\baseVector_2,-\curvature_{\indA}^\indD \baseVector_\indD) \\
	&= (\christbS{1\indA}{1} + \christbS{2\indA}{2}) \det(\baseVector_{1},\baseVector_2,\ParaNormalvector) \\
	&=\christbS{\indB\indA}{\indB}  \sqrt{\det(\ParaMetric)}. 
	\end{aligned}
	\end{equation}
	With \eqref{eq::derivativeBaseVector}, we obtain
	\begin{equation}
	\begin{aligned}
	 \left(\baseVector^\indA \sqrt{\det(\ParaMetric)}\right)_{,\indA} &= \baseVector^\indA_{,\indA} \sqrt{\det(\ParaMetric)} + \baseVector^\indA \left(\sqrt{\det(\ParaMetric)}\right)_{,\indA} \\ &= (-\Gamma_{\indA\indC}^\indA \baseVector^{\indC} + \curvature^{\indA}_{\indA} \ParaNormalvector)\sqrt{\det(\ParaMetric)} +\baseVector^\indA \christbS{\indB\indA}{\indB} \sqrt{\det(\ParaMetric)} \\
	&= H \ParaNormalvector\sqrt{\det(\ParaMetric)}.
	\end{aligned}
	\end{equation}
	Therefore, the first part of the lemma follows 
	\begin{equation}
	\begin{aligned}
	\text{div} \ParaTensor  &= \frac{1}{\sqrt{\det\ParaMetric}} \left(\ParaTensor\cdot\baseVector^\indA\sqrt{\det\ParaMetric}\right)_{,\indA} \\&= \ParaTensor_{,\indA}\cdot\baseVector^\indA + \frac{\left(\baseVector^\indA \sqrt{\det(\ParaMetric)}\right)_{,\indA}}{\sqrt{\det(\ParaMetric)}}\\
	&=\ParaTensor_{,\indA}\cdot\baseVector^\indA + H \, \ParaTensor \cdot \ParaNormalvector.
	\end{aligned}
	\end{equation}
	
	The second part of the lemma can be shown by direct calculation,
	\begin{equation}
	\begin{aligned}
	(\text{div} \RealTensor ) \circ \parametrization &= (\nabla \RealTensor \circ \parametrization) : (\baseVector_\indA\otimes\baseVector^\indA) + H \ParaTensor \cdot \ParaNormalvector \\ &= ((\nabla \RealTensor \circ \parametrization) \cdot \baseVector_\indA ) \cdot\baseVector^\indA + H \ParaTensor \cdot \ParaNormalvector	\\
	&=  \ParaTensor_{,\indA}\cdot\baseVector^\indA + H \, \mathbf T \cdot \RealNormalvector = \text{div} \ParaTensor.
	\end{aligned}
	\end{equation}
\end{proof}
\begin{proof}[Proof of \Cref{thm::productRules}] 
	The lemma can be shown by the direct calculations
	\begin{equation}
	\begin{aligned}
	\text{div}(\mathbf v\times \mathbf T)  &= \nabla_\Surface(\mathbf v \times \RealTensor):\projector +H \mathbf v \times \RealTensor \cdot \RealNormalvector \\
	&= 	[(\mathbf v \times \RealTensor)_{,i}\otimes \mathbf e^i\cdot \projector]:\projector +H \mathbf v\times\RealTensor  \cdot\RealNormalvector \\
	&= 	[(\mathbf v_{,i} \times \RealTensor + \mathbf v \times \RealTensor_{,i})\otimes \mathbf e^i\cdot \projector]:\projector +H \mathbf v\times  \RealTensor \cdot \RealNormalvector \\
	&= 	(-\RealTensor^\top\times\mathbf v_{,i}\otimes \mathbf e^i\cdot \projector + \mathbf v \times\nabla_\Surface \RealTensor ):\projector + H \mathbf v\times  \RealTensor \cdot \RealNormalvector \\
	&= \nabla_\Surface \mathbf v \ctimes \mathbf T^\top + \mathbf v \times \text{div}\RealTensor,
	\end{aligned}
	\end{equation}
	and
	\begin{equation}
	\begin{aligned}
	\text{div}(\mathbf v\cdot \mathbf T)  &= \nabla_\Surface(\mathbf v \cdot \RealTensor):\projector +H \mathbf v\cdot  \RealTensor \cdot \RealNormalvector \\
	&= 	[(\mathbf v \cdot \RealTensor)_{,i}\otimes \mathbf e^i\cdot \projector]:\projector +H \mathbf v\cdot\RealTensor \cdot \RealNormalvector \\
	&= 	[(\mathbf v_{,i} \cdot \RealTensor + \mathbf v \cdot \RealTensor_{,i})\otimes \mathbf e^i\cdot \projector]:\projector +H \mathbf v\cdot  \RealTensor \cdot \RealNormalvector \\
	&= 	(\RealTensor^\top\cdot\mathbf v_{,i}\otimes \mathbf e^i\cdot \projector + \mathbf v \cdot\nabla_\Surface \RealTensor ):\projector + H \mathbf v\cdot  \RealTensor \cdot \RealNormalvector \\
	&= \nabla_\Surface \mathbf v : \mathbf T^\top + \mathbf v \cdot \text{div}\RealTensor.
	\end{aligned}
	\end{equation}	
\end{proof}
\begin{proof}[Proof of \Cref{thm::changeTensors}]
	For the proof we use the relations \eqref{eq::relationDerivative} and \eqref{eq::relationDerivativeSecond}. Direct calculation yields for the linearized change in metric tensor
	\begin{equation}
	\begin{aligned} \RealChangeMetric\circ \parametrization &=
	\left[\frac{1}{2} \projector \cdot (\nabla \ureal + (\nabla \ureal)^\top) \cdot\projector \right] \circ \parametrization \\ 
	&= \frac{1}{2} (\baseVector^\alpha \otimes \baseVector_\alpha)\cdot(\nabla \ureal\circ \parametrization + (\nabla \ureal)^\top\circ \parametrization)\cdot(\baseVector_\beta \otimes \baseVector^\beta) \\
	&=\frac{1}{2} (\ureal_{,\beta} \cdot \baseVector_\alpha + \ureal_{,\alpha} \cdot \baseVector_\beta) \baseVector^\alpha \otimes \baseVector^\beta = \ParaChangeMetric,
	\end{aligned}
	\end{equation}	
	and for the linearized change in curvature tensor
	\begin{equation}
	\begin{aligned}
	\RealChangeCurvature \circ \parametrization&= [\projector\cdot(\RealNormalvector\cdot\nabla\nabla \mathbf u)\cdot\projector  - ( \RealNormalvector\cdot\nabla \ureal \cdot \RealNormalvector) \RealWeingarten]\circ \parametrization \\
	&= (\baseVector^\alpha \otimes \baseVector_\alpha)\cdot(\ParaNormalvector\cdot\nabla\nabla \mathbf u\circ \parametrization)\cdot(\baseVector_\beta \otimes \baseVector^\beta) 
	+ \ParaNormalvector\cdot(\nabla \ureal\circ \parametrization) \cdot \ParaNormalvector h_{\indA\indB} (\baseVector^\alpha \otimes \baseVector^\beta)   \\
	&= \ParaNormalvector\cdot \mathbf u_{,\indA\indB} \,\baseVector^\alpha \otimes \baseVector^\beta - \ParaNormalvector\cdot(\nabla \ureal \circ \parametrization) \cdot (\Gamma_{\indA\indB}^\indC \, \baseVector_\indC 
	+ h_{\indA\indB} \,\ParaNormalvector)\baseVector^\alpha \otimes \baseVector^\beta \\ 
	&\quad + \ParaNormalvector\cdot(\nabla \ureal \circ \parametrization)\cdot \ParaNormalvector \,h_{\indA\indB} (\baseVector^\alpha \otimes \baseVector^\beta) \\
	&= \ParaNormalvector\cdot( \mathbf u_{,\indA\indB} - \Gamma_{\indA\indB}^\indC \mathbf u_{,\indC}) \baseVector^\alpha \otimes \baseVector^\beta = \ParaChangeCurvature.
	\end{aligned}
	\end{equation}
\end{proof}
\begin{proof}[Proof of \Cref{thm::equilibirumMoments}]
	Applying the surface divergence theorem yields
	\begin{equation}
	\begin{aligned}
	\int_{\Surface} \text{div}(\n\times\mathbf M)  + \text{div}(\mathbf x \times \pmb \sigma) +\mathbf x \times \mathbf b \ddx &= 0.
	\end{aligned}
	\end{equation}
	Using the divergence product rule \eqref{eq::divergenceProductCross}
	results in
	\begin{equation}
	\begin{aligned}
	\int_{\Surface} \n\times\text{div}(\mathbf M)+ \nabla_\Omega \n\ctimes\mathbf M^\top  + \mathbf x \times\text{div}\pmb \sigma +\nabla_\Omega \mathbf x\ctimes \pmb \sigma^\top + \mathbf x \times \mathbf b \ddx &= 0.
	\end{aligned}
	\end{equation}
	With \eqref{eq::local_force_equilibrium},  and $\nabla_\Omega \mathbf x = \projector$ we have
	\begin{equation}
	\begin{aligned}
	\int_{\Surface} \n\times\text{div}(\mathbf M)+ \nabla_\Omega \n\ctimes\mathbf M^\top +\projector\ctimes \pmb \sigma^\top  \ddx &= 0.
	\end{aligned}
	\end{equation}
	Due to the definition of the stress tensor \eqref{eq::stressTensor} we have $
	\pmb \sigma^T = \mathbf N^T +  \mathbf S \otimes\RealNormalvector
	$ and it follows
	\begin{equation}
	\begin{aligned}
	\projector\ctimes \pmb \sigma^\top &= (\RealBaseVector_{\alpha}\otimes\RealBaseVector^{\alpha}) \ctimes (\mathbf N^T +  \mathbf S \otimes\RealNormalvector)\\
	&= (\RealBaseVector_{\alpha}\otimes\RealBaseVector^{\alpha}) \ctimes (N^{\indB\indC} \RealBaseVector_{\indC}\otimes\RealBaseVector_\indB+  \mathbf S^\indC \RealBaseVector_{\indC}\otimes\RealNormalvector) \\
	&= N^{\indB\indA}(\RealBaseVector_{\indA}\times\RealBaseVector_{\indB}) +  \mathbf S \times\RealNormalvector \\
	&= [\mathbf N^\top]_{\times} +  \mathbf S \times\RealNormalvector,
	\end{aligned}
	\end{equation}
	and furthermore
	\begin{equation}
	\begin{aligned}
	\nabla_\Omega \n\ctimes\mathbf M^\top &= (\RealNormalvector_{,\indA}\otimes \RealBaseVector^\indA)\ctimes(M^{\indB\indC}\RealBaseVector_\indB\otimes \mathbf \RealBaseVector_\indC) \\
	&= -(h_{\indA}^\indD \RealBaseVector_\indD\otimes \RealBaseVector^\indA)\ctimes(M^{\indB\indC}\RealBaseVector_\indB\otimes  \RealBaseVector_\indC) \\
	&= -h_{\indB}^\indD M^{\indB\indC} \RealBaseVector_\indD \times \RealBaseVector_\indC\\
	&= [-\mathbf H\cdot\mathbf M]_{\times}.
	\end{aligned}
	\end{equation}
	Thus,
	\begin{equation}\label{eq::balance_dirctor_momentum}
	\begin{aligned}
	&\int_{\Surface} \n\times\text{div}(\mathbf M) + \nabla_\Omega \n\ctimes\mathbf M^\top +\projector\ctimes \pmb \sigma^\top  \ddx   \\ &\quad=\int_{\Surface} \n\times(\text{div}(\mathbf M)-\mathbf S) + [-\mathbf H\cdot\mathbf M + \mathbf N^\top]_{\times} \ddx = 0.
	\end{aligned}
	\end{equation}
	From \eqref{eq::balance_dirctor_momentum} we deduce the sought conditions
	\begin{equation}
	[-\mathbf H\cdot\mathbf M + \mathbf N^\top]_{\times}  = 0,
	\end{equation}
	and
	\begin{equation}
	\mathbf S = \projector \cdot \text{div}(\mathbf M).
	\end{equation}	
\end{proof}

\stuff{	
\begin{proof}[Proof of \Cref{the::adjoint}]Direct calculation shows
		\begin{align}
		\mathbf T :(\mathbf M\cdot\mathbf H) &= (T_{\indA\indB} \baseVector^{\indA}\otimes\baseVector^{\indB}) : (M^{\indC\indD} h_\indD^\indE \baseVector_\indC\otimes\baseVector_{\indE}) = T_{\indA\indB} M^{\indB\indD} h_\indD^\indA,
		\end{align}
		and
		\begin{align}
		(\mathbf H\cdot\mathbf T) : \mathbf M &= (h_\indA^\indB T_{\indB\indC} \baseVector^{\indA}\otimes\baseVector^{\indC}) : ( M^{\indD\indE} \baseVector_\indD\otimes\baseVector_{\indE}) = h_\indA^\indB T_{\indB\indD}  M^{\indD\indA}.
		\end{align}
		Relabelling of indices yields the equivalence
		\begin{align}
		(\mathbf H\cdot\mathbf T) : \mathbf M &= T_{\indB\indD}  M^{\indD\indA} h_\indA^\indB = T_{\indE\indD}  M^{\indD\indF} h_\indF^\indE = T_{\indA\indB}  M^{\indB\indD} h_\indD^\indA.
		\end{align}
\end{proof}
\begin{equation}
\mathbf H\cdot\mathbf M = h_\indA^\indB \baseVector_{\indB} \otimes \baseVector^\indA \cdot M^{\indD\indC} \baseVector_\indD\otimes\baseVector_{\indC} = h_\indA^\indB M^{\indA\indC} \baseVector_\indB\otimes\baseVector_{\indC}
\end{equation}
\begin{equation}
\mathbf M\cdot\mathbf H = M^{\indD\indC} \baseVector_\indD\otimes\baseVector_{\indC} \cdot h_\indA^\indB \baseVector^\indA \otimes  \baseVector_{\indB}= h_\indA^\indB M^{\indC\indA} \baseVector_\indC\otimes\baseVector_{\indB}
\end{equation}
}
\section{Derivation of the weak form}\label{sec::derivation_weak}
In this section the weak form of the governing equations is derived. To this end, we multiply \eqref{eq::local_force_equilibrium} with a test function $\mathbf v \in \mathcal V_0$ and integrate over the shell surface,
\begin{equation}\label{eq::weak}
\begin{aligned}
-\int_\Surface  \mathbf v \cdot \text{div} \, \pmb \sigma \ddx &= \int_\Surface  \mathbf v \cdot \bodyforce \ddx.
\end{aligned}
\end{equation}
Here, the function space of the test functions is 
\begin{equation}
\begin{aligned}
\mathcal V_0 = \{&\pmb \eta: \Omega\rightarrow \R^3 \,|\,\pmb \gamma(\pmb \eta) \in L^2(\Surface,\R^3),\, \pmb \rho(\pmb\eta) \in L^2(\Surface,\R^3),\,
\\ &\qquad\qquad\pmb \eta \cdot \mathbf e_i = 0 \text{ on } \Gamma_{D_i},\, \nabla_\Surface(\RealNormalvector\cdot\pmb \eta) \cdot \pmb \mu = 0 \text{ on } \Gamma_{D_c} \},
\end{aligned}
\end{equation}
where $\Gamma_{D_i}$, $\Gamma_{D_t}$, and $\Gamma_{D_\mu}$ denote Dirichlet boundaries. On $\Gamma_{D_i}$ the displacement in direction $\mathbf e_i$ is restrained and on $\Gamma_{D_t}$ and $\Gamma_{D_\mu}$ the rotation of the shell around the boundary tangent vector and the boundary normal vector is restrained respectively. The corresponding Neumann boundaries are given by $\Gamma_{N_i} = \Gamma \setminus \Gamma_{D_i}$, $\Gamma_{N_t} = \Gamma \setminus \Gamma_{D_t}$, and $\Gamma_{N_\mu} = \Gamma \setminus \Gamma_{D_\mu}$.
Integration by parts of the term on the left side yields
\begin{equation}
\begin{aligned}
\int_\Surface  \nabla_\Surface\mathbf v \cdot \pmb \sigma^\top \ddx &= \int_{\Gamma} \mathbf v \cdot \pmb \sigma \cdot \pmb \mu \dd s_x + \int_\Surface  \mathbf v \cdot \bodyforce \ddx.
\end{aligned}
\end{equation}
We have $\pmb \sigma^\top = \mathbf N^\top + \mathbf S \otimes \RealNormalvector$ and obtain
\begin{equation}
\begin{aligned}
\int_\Surface  \nabla_\Surface\mathbf v \cdot \mathbf N^\top \ddx + \int_\Surface  (\RealNormalvector \cdot\nabla_\Surface\mathbf v) \cdot \mathbf S   \ddx &= \int_{\Gamma} \mathbf v \cdot \pmb \sigma \cdot \pmb \mu \dd s_x + \int_\Surface  \mathbf v \cdot \bodyforce \ddx.
\end{aligned}
\end{equation}
Using \eqref{eq::transverseForce} and integration by parts of the second term on the left yields
\begin{equation}
\begin{aligned}
\int_\Surface (\RealNormalvector\cdot\nabla_\Surface \mathbf v)  \cdot \mathbf S \ddx =
\int_\Gamma (\RealNormalvector\cdot\nabla_\Surface \mathbf v) \cdot \mathbf M \cdot \pmb \mu \dd s_x 
-\int_\Surface \nabla_\Surface(\RealNormalvector\cdot\nabla_\Surface \mathbf v) : \mathbf M  \ddx.
\end{aligned}
\end{equation}
Due to \eqref{eq::constitutionAgain} we obtain the relation 
\begin{equation}\label{eq::auxilI}
\begin{aligned}
\int_\Surface  \nabla_\Surface\mathbf v : \mathbf N^\top \ddx &= \int_\Surface  \nabla_\Surface\mathbf v : \bar {\mathbf N} \ddx - \int_\Surface  \nabla_\Surface\mathbf v : (\mathbf M \cdot \mathbf H)\ddx \\
&= \int_\Surface \pmb\gamma(\mathbf v):\mathcal E : \pmb\gamma(\mathbf u)  \ddx - \int_\Surface   (\mathbf H\cdot \nabla_\Surface\mathbf v) : \mathbf M \ddx.
\end{aligned}
\end{equation}
Furthermore, we have
\begin{align}\label{eq::auxilII}
-\left[ \mathbf H\cdot\nabla_\Surface\mathbf v + \nabla_\Surface(\RealNormalvector\cdot\nabla_\Surface \mathbf v)\right] : \mathbf M &= - \pmb\rho(\mathbf v): \mathbf M,
\end{align}
and
\begin{align}\label{eq::auxilIII}
\int_\Gamma (\RealNormalvector\cdot\nabla_\Surface \mathbf v) \cdot \mathbf M \cdot \pmb \mu \dd s_x - \int_{\Gamma} \mathbf v \cdot (\mathbf H \cdot \mathbf M)  \cdot \pmb \mu \dd s_x &= \int_\Gamma \nabla_\Surface (\RealNormalvector\cdot \mathbf v) \cdot \mathbf M \cdot \pmb \mu \dd s_x.
\end{align}
Therefore, by using \eqref{eq::auxilI}, \eqref{eq::auxilII}, and \eqref{eq::auxilIII} we obtain from \eqref{eq::weak} the final weak form 
\begin{equation}
\begin{aligned}
t\int_\Surface  & \pmb\gamma(\mathbf v):\mathcal E : \pmb\gamma(\mathbf u) \ddx + \frac{t^3}{12} \int_\Surface  \pmb\rho(\mathbf v):\mathcal E : \pmb\rho(\mathbf u)  \ddx = \int_\Surface \mathbf v \cdot \mathbf b \ddx\\ 
& + \int_{\Gamma_{N_i}} v_i \, N^N_i \dd s_x - \int_{\Gamma_{N_t}} \nabla_\Surface (\RealNormalvector\cdot\mathbf v) \cdot \mathbf t \; M^N_t \dd s_x - \int_{\Gamma_{N_\mu}} \nabla_\Surface (\RealNormalvector\cdot\mathbf v) \cdot \pmb \mu \; M^N_\mu \dd s_x.
\end{aligned}
\end{equation}

\end{appendix}
\bibliographystyle{spmpsci}      
\bibliography{literature}

\end{document}